%\documentclass[letterpaper, 10 pt, conference]{ieeeconf}  % Comment this line out
                                                          % if you need a4paper
%\documentclass[a4paper, 10pt, conference]{ieeeconf}      % Use this line for a4
                                                          % paper
%\documentclass[a4paper, 11pt, peerreview]{ieeeconf}      % Use this line for a4
\documentclass[journal]{IEEEtran}      % Use this line for a4

\IEEEoverridecommandlockouts                              % This command is only
                                                          % needed if you want to
                                                          % use the \thanks command
%\overrideIEEEmargins
% See the \addtolength command later in the file to balance the column lengths
% on the last page of the document

% The following packages can be found on http:\\www.ctan.org
%\usepackage{graphics} % for pdf, bitmapped graphics files
%\usepackage{epsfig} % for postscript graphics files
%\usepackage{mathptmx} % assumes new font selection scheme installed
%\usepackage{times} % assumes new font selection scheme installed
\usepackage{amsmath} % assumes amsmath package installed
\usepackage{amssymb} % assumes amsmath package installed
\usepackage{amsthm}  % assumes amsmath package installed
\usepackage{cite,enumerate}
\usepackage[pdftex]{graphicx}
\usepackage{color}
\usepackage{subfigure}
\usepackage{comment}
\usepackage{url}
\usepackage{algorithm,algorithmic}
%\floatname{algorithm}{Procedure}
%\renewcommand{\algorithmicrequire}{\textbf{Input:}}
%\renewcommand{\algorithmicensure}{\textbf{Output:}}

%\theoremstyle{definition}
\newtheorem{thm}{\bf Theorem}
\newtheorem{lem}{\bf Lemma}

\newtheorem{rem}{\bf Remark}
\newtheorem{assmpt}{\bf Assumption}

\newcommand{\bG}{\ensuremath{{\mathbb G}}}

\newcommand{\bF}{\ensuremath{{\mathbb F}}}
\newcommand{\bR}{\ensuremath{{\mathbb R}}}
\newcommand{\bN}{\ensuremath{{\mathbb N}}}
\newcommand{\bX}{\ensuremath{{\mathbb X}}}
\newcommand{\bY}{\ensuremath{{\mathbb Y}}}

\newcommand{\ep}{\ensuremath{\epsilon}}

\newcommand{\cB}{\ensuremath{\mathcal{B}}}
\newcommand{\cC}{\ensuremath{\mathcal{C}}}

\newcommand{\cG}{\ensuremath{\mathcal{G}}}

\newcommand{\cL}{\ensuremath{\mathcal{L}}}

\newcommand{\cN}{\ensuremath{\mathcal{N}}}

\newcommand{\col}{\ensuremath{{\rm col}}}
\newcommand{\row}{\ensuremath{{\rm row}}}
\newcommand{\diag}{\ensuremath{{\rm diag}}}

\allowdisplaybreaks[3] 

\title{Decentralized Design and 
Plug-and-Play Distributed Control for Linear Multi-Channel Systems
}

\author{Taekyoo Kim, Donggil Lee, and Hyungbo Shim,~\IEEEmembership{Senior~Member,~IEEE }% <-this % stops a space
\thanks{This work was supported 
	%in part 
	by the BK21 FOUR program of the Education and Research Program for Future ICT Pioneers, and
	% Seoul National University in 2021, in part 
	%by a grant to Bio-Mimetic Robot Research Center funded by Defense Acquisition Program Administration, and by Agency for Defense Development (UD190018ID).
by the National Research Foundation of Korea (NRF) grant funded by the Korea government (MSIT) (No. NRF-2017R1E1A1A03070342).
}% <-this % stops a space
\thanks{Taekyoo Kim, Donggil Lee, and Hyungbo Shim
	%T.~Kim,
	%is with Education and Research program for Future ICT Pioneers {\tt\small ()}. 
	%D.~Lee and H.~Shim 
	are with ASRI, Department of Electrical and Computer Engineering, Seoul National University, Korea{\tt\small (taekyoo.kim@cdsl.kr, dglee@cdsl.kr, hshim@snu.ac.kr)}.}%

}

\allowdisplaybreaks[3]

\begin{document}

\maketitle
\thispagestyle{plain} % or empty
\pagestyle{plain}

\begin{abstract} %\color{white}
We propose a distributed control, in which many identical control agents are deployed for controlling a linear time-invariant plant that has multiple input-output channels.
Each control agent can join or leave the control loop during the operation of stabilization without particular initialization over the whole networked agents.
Once new control agents join the loop, they self-organize their control dynamics, which does not interfere the control by other active agents, which is achieved by local communication with the neighboring agents.
%output-feedback controller design for a linear time-invariant plant interacting with networked agents, where interaction and communication of each agent are limited to its associated input-output channel and its neighboring agents, respectively.
%	The design scheme has a decentralized structure so that each agent can self-organize its own controller using the locally accessible information only.
%	Furthermore, under mild conditions, the proposed controller is capable of maintaining stability even when agents join/leave the network during the operation without requiring any manipulation on other agents. This `plug-and-play' feature leads to efficiency for controller maintenance as well as resilience against changes in interconnections.
The key idea enabling these features is the use of Bass' algorithm, which allows the distributed computation of stabilizing gains by solving a Lyapunov equation in a distributed manner.
\end{abstract}

\begin{IEEEkeywords}
multi-channel plant, networked control agent, plug-and-play, decentralized design, distributed control
\end{IEEEkeywords}

%%%%%%%%%%%%%%%%%%%%%%
\section{Introduction}
%%%%%%%%%%%%%%%%%%%%%%

Distributed control is receiving a lot of attention in response to the recent demand for controlling a dynamic system via spatially deployed multi-agents (i.e., networked local controllers).
This paper continues the work initiated by \cite{Wang2017,LWang20} in this regard.
In particular, we consider a multi-channel linear time-invariant plant written as
\begin{subequations}\label{eq:plant}
	\begin{align}
		\dot x(t) &= A x(t) + \sum_{i \in \cN(t)} B_i u_i(t)  \label{eq:plant_in} \\
		y_i(t) &= C_i x(t), \qquad i \in \cN(t)  \label{eq:plant_out}
	\end{align}
\end{subequations}
where $x\in\bR^n$ is the plant state, $u_i\in\bR^{m_i}$ and $y_i\in\bR^{p_i}$ are the input and the output corresponding to the $i$th channel, respectively.
The plant is controlled by {\em networked control agents} via input-output channels; that is, a control agent in charge of the $i$th channel measures the output $y_i$, has access to the input $u_i$, and communicates through a bidirectional communication network with its neighboring control agents.
We suppose there are at most $\bar N$ channels, some of which are active and the rest are idle.
Active channels imply there are control agents that are using the channel, and the idle channels do not have corresponding control agents.
Status of the channel between active and idle can vary as time goes on.
The index set of active channels at time $t$ is denoted by $\cN(t) \subset \{1,\dots,\bar N\}$.
The situation is depicted in Fig.~\ref{fig:1}.

We will design {\em identical} control agents that collectively stabilize the plant \eqref{eq:plant}; that is, a control agent does not have its own designated channel, and when a control agent happens to link to any channel, it automatically designs its own control gains and takes part in stabilization in harmony with other active agents.
It is assumed that an agent learns $A$, $B_i$, and $C_i$ when it links to the $i$th channel, but they cannot learn $B_j$ and $C_j$ for $j \not = i$.
For convenience of forthcoming derivation, we assume that 
\begin{align}\label{eq:B_i_C_i}
	\|B_i\| \le 1, \quad \|C_i\| \le 1, \qquad \forall i.
\end{align}
This is indeed no loss of generality because when an agent gets to know $B_i$ and $C_i$, the agent can redefine them as $\tilde B_i = B_i/\|B_i\|$ and $\tilde C_i = C_i/\|C_i\|$, and treat $\tilde y_i = y_i/\|C_i\|$ as the output and $\tilde u_i = \|B_i\|u_i$ as the input for its own use.

\begin{figure}
	\centering
	\includegraphics[width=1\columnwidth]{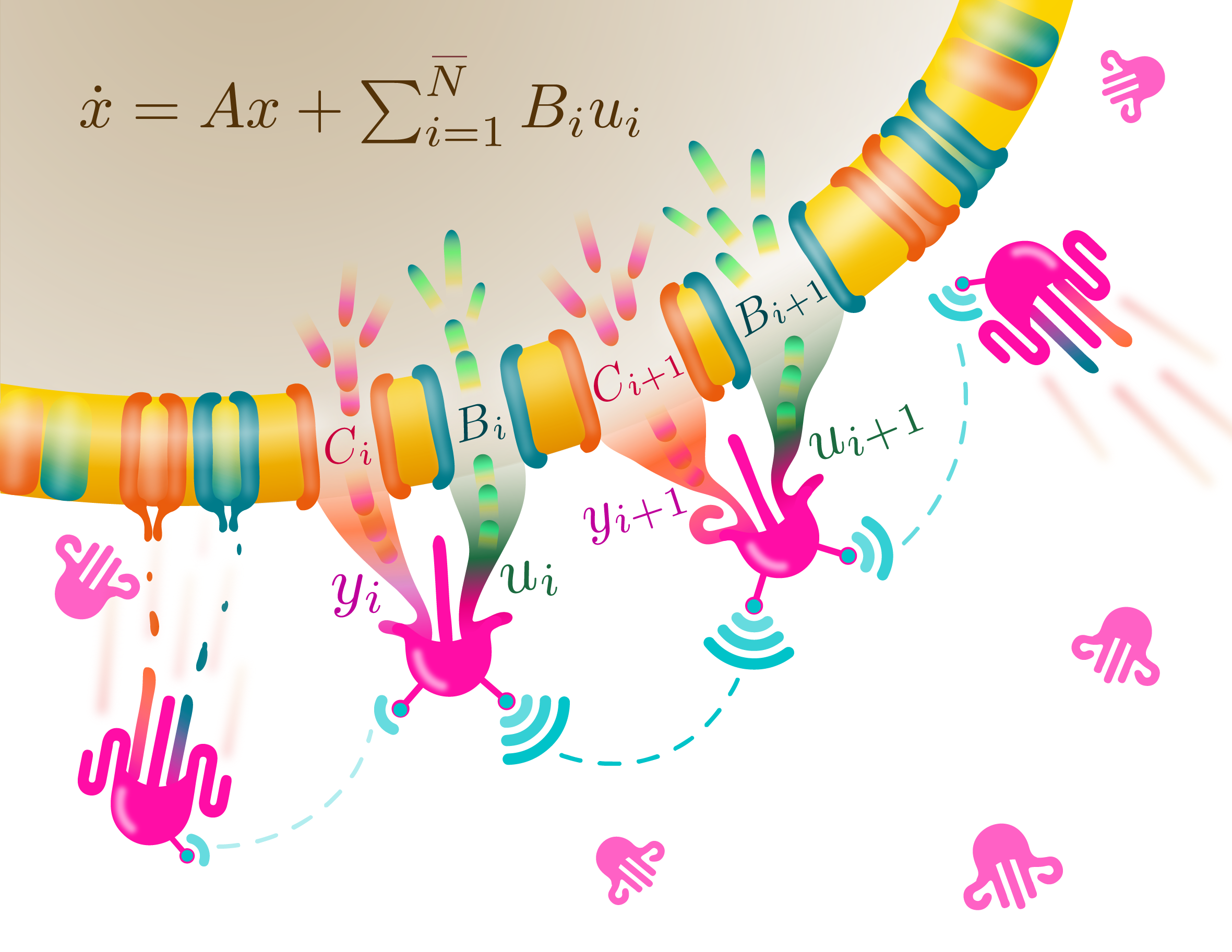}
	\caption{The plant (in brown color) is under control by multiple control agents (in pink color) that can freely join or leave the control network. Each agent gets to know $A$, $B_i$, and $C_i$ (but not $B_j$ and $C_j$, $j \not = i$), obtains the measurement $y_i$ (but not $y_j$, $j \not = i$), and generates control action $u_i$ based on local communication (dashed line in sky color) with its nearby neighbors. No centralized coordinator exists and each agent self-organizes its own dynamics for control action.}
	\label{fig:1}
\end{figure}

To achieve the goal of stabilization, we assume that $\cN(t)$ does not change too often (whose meaning will be clarified in Section \ref{sec:dwell}), that the plant with active channels is collectively controllable and observable [i.e., $(A,\{B_i, i\in\cN(t)\}, \{C_i,i\in\cN(t)\})$ is controllable and observable for each $t$], and that the network graph describing the communication among the active agents is undirected and connected for all time.
The problem setup is well-suited for cooperative multi-robots, and an example is illustrated in Section \ref{sec:casestudy}.

Key features of the proposed distributed control include the following:
\begin{enumerate}

\item [(F1)] (Distributed operation) Each control agent interacts with the plant through its local input-output channel and exchanges information with its neighboring agents through local communication. No central unit exists.

\item [(F2)] (Decentralized design) Each agent {\em self-organizes} its own control dynamics from its local knowledge.

\item [(F3)] (Plug-and-play) Each agent can join or leave the network during the operation without resetting or re-initializing the states of all other agents in the network.
%without asking other agents to redesign control dynamics.

\end{enumerate}

The feature of plug-and-play is challenging but pursued in the literature because it yields flexibility and resiliency, which are essential in networked control system, especially for large-scale systems with multiple channels \cite{Riverso2013,Bendtsen2013,Zeilinger2013}. 
For instance, when there is a change in the network configuration, the plug-and-play capability implies that the functionality will be automatically recovered soon after the change even if not all agents notice the change and do anything special in response to the change. 
Moreover, combined with the capability of decentralized design, implementation and maintenance of the networked control system become simplified. 
For example, when a malfunction is observed, one may simply deploy new control agents without identifying the fault (since all the control agents are identical and they learn their role online) and without stopping the system (thanks to the plug-and-play feature).

In Section \ref{sec:2}, we develop an algorithm for distributed output-feedback stabilization, which is embedded in each control agent, and exhibits the features (F1), (F2), and (F3).
%to generate the local control by using its locally accessible information, 
%so that the closed-loop system becomes stable and 

% above as much as possible.

\subsection{Literature survey and the proposed contribution}

A related problem has been extensively studied in the literature under the name of `decentralized control' around 1970s (e.g., see \cite{Siljack2011} and references therein).
A difference is that no communication is allowed between controllers, which poses a structural constraint on the control design.
It is shown in \cite{SWnag1973} that the decentralized controller exists if and only if the plant's `fixed spectrum' is stable, which is a set of closed-loop eigenvalues invariant to changes of local controllers.

By introducing inter-controller communications, a more general class of plants can be handled by so-called `distributed control.'
The first step towards distributed output-feedback control was, motivated by the classical separation principle, to employ {\em distributed observers} studied in \cite{Zhu2014,Park2017,Wang2017,Mitra2018,Han2018,TKim2020}.
However, while these methods allow the agent $i$ to distributedly compute the estimate $\hat x_i$ of the plant's state $x$, a difficulty arises unless the input signals to the plant are shared by all agents.
In fact, a naive combination of the distributed observer and a feedback control $u_i = F_i \hat x_i$ violates both the distributed operation (F1) and the decentralized design (F2) because the input to the plant is now $\sum_j B_j u_j = \sum_j B_j F_j \hat x_j$, and therefore, the distributed observer in the control agent $i$ needs to know all other $B_j$, $F_j$, and $\hat x_j$, $j \not = i$, as well, in order to cancel them in its own observer error dynamics. 
Recently, the authors of \cite{LWang20} and \cite{XZhang2019} have proposed a solution, which enabled that agent $i$ uses $(\sum_j B_j F_j) \hat x_i$ instead of $\sum_j B_j F_j \hat x_j$, so that the real-time information of $\hat x_j$ ($j \not = i$) is not needed for agent $i$.
The underlying idea is that, when the consensus is achieved among all the estimates $\hat x_j$, both terms become the same.
Hence, the distributed operation (F1) is achieved in \cite{XZhang2019,LWang20}.

Nevertheless, achieving all three goals (F1), (F2), and (F3) is still challenging.
The controller design in \cite{XZhang2019,LWang20} is centralized in the sense that the observer in the agent $i$ still needs other $B_j$ and $F_j$, and moreover, design of its own gain $F_i$ involves non-local information such as other agents' $B_j$ and/or gains $F_j$. 
Our first contribution is the decentralized design (F2) of $F_i$ (and the gains of distributed observers as well).
It will turn out that this is possible thanks to a distributed computation of a Lyapunov equation, whose idea is inspired by the blended dynamics approach \cite{JKim2016,JLee2020}.
Specifically, we synthesize the identical control agent, which can self-organize its own control gains from its locally accessible information such as $(A,B_i,C_i)$.
The proposed distributed control algorithm also supports the plug-and-play operation (F3), as long as controllability and observability are maintained and the communication graph remains connected.
In fact, the plug-and-play feature enables stabilization of the plant when the network change is not too often.
Since the term `not too often' is vague, we obtain in Section \ref{sec:dwell} a dwell time of the change such that the closed-loop system remains stable even if any control agents join or leave the network, as long as each change occurs at least after the dwell time.

\subsection{Notation and preliminaries}

We let $1_N$ be the $N \times 1$ column vector comprising all ones, and denote by $I_N$ the $N \times N$ identity matrix.
For matrices $A$ and $B$, their Kronecker product is denoted by $A\otimes B$, and for $A \in \bR^{m \times m}$ and $B \in \bR^{n \times n}$, their Kronecker sum $A\otimes I_n +I_m \otimes B$ is denoted by $A \oplus B$. 
For matrices $A$ and $B$, we use the notation $\row(A,B) := [A,B]$ and $\col(A,B) := [A^T, B^T]^T$ as long as they are well-defined, and the block-diagonal matrix of $A$ and $B$ is simply written as $\diag(A,B)$.
For a vector $x$ and a matrix $P$, $|x|$, $\|P\|$, and $\|P\|_F$ denote the Euclidean norm, the induced matrix 2-norm, and the Frobenius norm, respectively. For a matrix $P\in\bR^{n\times n}$, ${\rm vec}(P)$ denotes its vectorization (a column stack of all the columns of $P$) in $\bR^{n^2}$, which satisfies that $\|P\| \le |{\rm vec}(P)|=\|P\|_F \le \sqrt{n}\|P\|$.
For matrices $A$ and $B$ of any size, $\|A \otimes B\| = \|A\| \|B\|$.

A graph is denoted by $\cG$ with the set of its nodes as $\cN$.
We use standard terminology for graphs as in \cite{Mesbahi2010}, and consider {\em undirected} and {\em connected} graphs in this paper.
A graph is {\em unweighted} if all components of its {\em adjacency matrix} have either 0 or 1. 
With the (symmetric) {\em Laplacian matrix} $\cL$ of a undirected graph $\cG$, we denote the eigenvalues of $\cL\in\bR^{|\cN| \times |\cN|}$ by $\lambda_1(\cL),\lambda_2(\cL),\cdots,\lambda_{|\cN|}(\cL)$, where $\lambda_1(\cL)=0$ and $\lambda_i(\cL)\le\lambda_j(\cL)$, for $i<j$, and $|\cN|$ is the cardinality of the set $\cN$.
A undirected connected graph has simple zero eigenvalue, or equivalently, $\lambda_2(\cL)>0$. 
In this case, there is a matrix $R\in\bR^{|\cN| \times (|\cN|-1)}$ such that
\begin{align}\label{eq:R}
	1_{|\cN|}^T R =0,\quad R^TR=I_{(|\cN|-1)},\quad R^T\cL R=\Lambda^+
\end{align}
where $\Lambda^+\in\bR^{(|\cN|-1)\times(|\cN|-1)}$ is the diagonal matrix with positive entries $\lambda_2(\cL)$, $\cdots$, $\lambda_{|\cN|}(\cL)$.
It is well known from \cite[Theorem 1]{Anderson85} that 
\begin{equation}\label{eq:Anderson85}
\|\cL\|\le |\cN|,
\end{equation}
and, from \cite{Mohar91} that for any connected unweighted undirected graph $\cG$
\begin{align}\label{eq:Mohar91}
	\lambda_2(\cL) \ge 4/{|\cN|}^2.
\end{align}

%%%%%%%%%%%%%%%%%%%%%

\section{Synthesis of Control Agents}\label{sec:2}

%%%%%%%%%%%%%%%%%%%%%

In this section, we present a design of control agents for distributed control that features three properties (F1), (F2), and (F3) in the Introduction, under the following assumption which will be relaxed in the following subsections.
\begin{assmpt}[Tentative assumption]\label{ass:tentative}
\begin{enumerate}
\item [(T1)] There are $N$ active agents, and the number $N$ is known to all agents.
\item [(T2)] The communication graph $\cG$ among the active agents is undirected, unweighted, and connected with $\cN = \{1,\dots,N\}$.
\item [(T3)] The plant $(A,B,C)$ is controllable and observable where %$B$ and $C$ are defined by
\begin{align*}
B:=[B_1,B_2,\cdots,B_N], \quad C:=[C_1^T,C_2^T,\cdots,C_N^T]^T.
\end{align*}
\item [(T4)] Gain matrices $F_i \in \bR^{m_i \times n}$ and $L_i \in \bR^{n \times p_i}$ are designed such that
\begin{enumerate}
\item $A + \sum_{j=1}^N B_j F_j = A + BF$ is Hurwitz
\item $A + \sum_{j=1}^N L_j C_j = A + LC$ is Hurwitz
\end{enumerate}
where $F = [F_1^T,\cdots,F_N^T]^T$ and $L = [L_1,\cdots,L_N]$. 
Two gains $F_i$ and $L_i$ are known to the agent linked to the channel $i$.
\end{enumerate}
\end{assmpt}

\subsection{Distributed State Observer and State Feedback}

We design the control agent as a combination of distributed state observer and a state feedback.
In particular, the agent $i$ is given by
\begin{align}
\dot{\hat x}_i &= A \hat x_i + N B_i F_i \hat x_i + N L_i (C_i \hat x_i - y_i) + \gamma \sum_{j \in \cN_i} (\hat x_j-\hat x_i) \notag \\
u_i &= F_i \hat x_i \label{eq:prop}
\end{align}
where $\cN_i$ denotes the index set of the neighboring agents of agent $i$, and $\gamma \in \bR$ is the coupling gain to be designed. It is noted that the first two terms resemble the copy of the plant while the plant's input term $\sum_{j=1}^N B_j u_j = \sum_{j=1}^N B_j F_j \hat x_j$ is replaced with $NB_iF_i\hat x_i$.
The third term serves the standard injection of output error, which is inflated $N$ times.
The last term is the coupling that enables the communication with the neighbors, and is the main player that enforces synchronization of all $\hat x_i \in \bR^n$.

The intuition for the form \eqref{eq:prop} is rooted in the blended dynamics theorem \cite{JKim2016,JLee2020}, which can be roughly summarized as follows.
(See \cite{BlendedBook} for a comprehensive summary of this approach.)
For a network of heterogeneous dynamics
$$\dot x_i = f_i(x_i) + \gamma \sum_{j \in \cN_i}(x_j - x_i), \qquad i = 1,\dots,N,$$
under a undirected and connected graph, if all the states $x_i$ are enforced to synchronize (by sufficiently large $\gamma$), then a collective behavior emerges, which approximately obeys the so-called blended dynamics defined by
$$\dot s = \frac1N \sum_{i=1}^N f_i(s)$$
as long as the blended dynamics is stable.
For our case of \eqref{eq:prop}, the blended dynamics becomes
\begin{align}\label{eq:ourblended}
\begin{split}
	\dot s &= As + \sum_{i=1}^N B_i F_i s + \sum_{i=1}^N L_i C_i s - \sum_{i=1}^N L_i y_i \\
	&= As + BFs + L(Cs - y)
\end{split}
\end{align}
where $s \in \bR^n$ and $y = [y_1^T,\dots,y_N^T]^T$, which is the same as the standard, single, observer-based controller that makes the closed-loop system stable under the separation principle. However, unlike the blended dynamics theorems in \cite{JKim2016,JLee2020}, the blended dynamics \eqref{eq:ourblended} is not a stable system in general, and therefore, we have to work on a corresponding error dynamics as done in the proof of the following theorem.

\begin{thm}\label{thm:separation}
(i) Under Assumption \ref{ass:tentative}, the closed-loop consisting of the plant \eqref{eq:plant} and the distributed controllers \eqref{eq:prop} is an exponentially stable LTI system if the coupling gain $\gamma$ is greater than a threshold $\bar\gamma$, i.e., if $\gamma > \bar\gamma$.

(ii) The threshold $\bar\gamma$ is given by
\begin{align}
\bar\gamma = \frac{1}{\lambda_2(\cL)}\bigg(\theta+\theta^2\kappa
 +4N^2\max\limits_{i\in\cN}\|F_i\|^2\kappa\sqrt{1+\theta^2\kappa^2}\,\bigg) \label{eq:gamma_general_1}
\end{align}
where
\begin{itemize}
\item $\theta:= \|A\|+N\max\limits_{i\in\cN}\|L_i\|+2N\max\limits_{i\in\cN}\|F_i\|$, where $\cN=\{1,\dots,N\}$, and
\item $\kappa = \lambda_{\max}(\diag(M_1,M_2))/\lambda_{\min}(\diag(Q_1,Q_2))$
with positive definite $M_1$, $M_2$, $Q_1$, and $Q_2$ satisfying
\begin{align}\begin{split}\label{eq:sep_Lyap_eq}
		M_1(A+BF) + (A+BF)^TM_1 &= -2Q_1 \\
		M_2(A+LC) + (A+LC)^TM_2 &= -2Q_2.
\end{split}\end{align}
\end{itemize}
\end{thm}

It is seen from (i) that, when $\gamma$ is large enough, the closed-loop system becomes asymptotically stable, and the question how large $\gamma$ should be is answered by (ii).
While the value $\bar\gamma$ consists of global information, it will be shown that it can be obtained in a distributed way.

\begin{proof}
Define the error variable $e_i:=\hat x_i - x$.
Then, it holds that $\sum_{j \in \cN_i}(\hat x_j-\hat x_i) = \sum_{j \in \cN_i}(e_j-e_i) = -\sum_{j=1}^N l_{ij} e_j$ where $l_{ij}$ is the $(i,j)$-th entry of the graph Laplacian $\cL$.
Also, it is seen, with \eqref{eq:plant} and \eqref{eq:prop}, that the error dynamics becomes
\begin{align*}
	\dot e_i &= (A+NL_iC_i)e_i +(NB_iF_ie_i-\sum_{j=1}^N B_jF_je_j) \notag\\
	&\quad +(NB_iF_i-BF)x-\gamma\sum_{j=1}^N l_{ij}e_j, \qquad i = 1,\dots,N.
\end{align*}
The dynamics of the aggregated error $e:=\col_{i=1}^N(e_i)\in\bR^{Nn}$ is then written by
\begin{align}\label{eq:dot_e}
	\dot e &= G e+Hx-\gamma(\cL \otimes I_n)e
\end{align}
where $G\in\bR^{Nn \times Nn}$ and $H\in\bR^{Nn \times n}$ are defined by
\begin{align}\begin{split}\label{eq:G_H}
		G &:= \diag_{i=1}^N(A+NL_iC_i+NB_iF_i) \\
		&\qquad \qquad -1_N\otimes \begin{bmatrix} B_1F_1 &\cdots& B_NF_N \end{bmatrix} \\
		H &:= \col_{i=1}^N(NB_iF_i-BF).
\end{split}\end{align}
Let us decompose $e$ into its average $\bar e$ and the rest $\tilde e$ such that
\begin{align*}
\bar e:= (\frac{1}{N}1_N^T \otimes I_n)e \in \bR^n, \quad 
	\tilde e:= (R^T \otimes I_n)e \in \bR^{(N-1)n}
\end{align*}
where the matrix $R$ satisfies \eqref{eq:R}, which leads to
$$e=(1_N \otimes I_n)\bar e+(R \otimes I_n)\tilde e.$$
By applying this coordinate change to \eqref{eq:dot_e} and \eqref{eq:plant_in}, we have
%it is obtained that
\begin{align}\label{eq:dot_e_dot_x}
	\begin{bmatrix}
		\dot x \\
		\dot{\bar e} \\
		\dot{\tilde e}
	\end{bmatrix}
	\!\!=\!\!
	\begin{bmatrix}
		A+BF & BF & \square_1 \\
		0 & A+LC & \square_2 \\
		\square_3 & \square_4 & \square_5-\gamma( \Lambda^+\otimes I_n)
	\end{bmatrix}\!\!\!
	\begin{bmatrix}
		x \\
		\bar e \\
		\tilde e  
	\end{bmatrix}
\end{align}
where $\Lambda^+$ is given in \eqref{eq:R} and
\begin{align*}
\square_1 &:=\begin{bmatrix}
B_1F_1 & \cdots & B_NF_N
\end{bmatrix}(R\otimes I_n) \\
\square_2 &:=(\frac{1}{N}1_N^T\otimes I_n)\left[\diag_{i=1}^N(A+NL_iC_i)\right](R\otimes I_n) \\
\square_3 &:= (R^T\otimes I_n)H \\
\square_4 &:= (R^T\otimes I_n)G(1_N^T\otimes I_n) \\
\square_5 &:= (R^T\otimes I_n)G(R\otimes I_n).
\end{align*}
Here, as intended in \eqref{eq:prop}, we observe that the upper left $2 \times 2$ block of \eqref{eq:dot_e_dot_x} is the typical output-feedback configuration for the system $(A,B,C)$, which is Hurwitz.
In addition, $\Lambda^+$ is positive definite, and thus, a sufficiently large $\gamma$ yields stability of \eqref{eq:dot_e_dot_x} and the property (i).
More rigorous analysis using a Lyapunov function (which also yields the property (ii)) is found in the Appendix \ref{app:1}.
\end{proof}

\begin{rem}
The proposed control agents \eqref{eq:prop} stabilizes the plant \eqref{eq:plant} featuring the distributed operation (F1), and thus, serve as an alternative solution of \cite{XZhang2019,LWang20}.
In fact, the distributed controller proposed in \cite{XZhang2019,LWang20} is similar to \eqref{eq:prop} but different in that the term
\begin{equation}\label{eq:other}
\left(\sum_{j=1}^N B_jF_j\right) \hat x_i
\end{equation}
was used instead of the term $NB_iF_i \hat x_i$ in \eqref{eq:prop}.
In order to implement \eqref{eq:prop} with \eqref{eq:other}, each control agent needs to know $B_j$ and $F_j$ of other agents, which is a drawback.
Technically, employing \eqref{eq:other} in \eqref{eq:prop} renders $\square_3=0$ in \eqref{eq:dot_e_dot_x}, which is not very helpful in view of stabilization by sufficiently large $\gamma$.
\end{rem}

\subsection{Removing Assumption (T4): Distributed Computation of $F_i$ and $L_i$}

The distributed controller in the previous subsection does not meet the requirement of decentralized design (F2) because, in order to design $F_i$, one needs to find $F_i$ such that 
$$A+\begin{bmatrix} B_1 & \cdots & B_N \end{bmatrix} \begin{bmatrix} F_1 \\ \vdots \\ F_N \end{bmatrix} \quad \text{is Hurwitz}$$
which implies that the selection of $F_i$ is affected by all $B_j$'s and all other $F_j$'s in general.
In other words, selection of $F_i$'s is correlated with other $F_j$'s.
Our idea of handling this correlation is based on a Lyapunov equation, arising from the following Bass' Algorithm.

\begin{lem}[Bass' Algorithm \cite{Bass61}]\label{thm:Bass}
Assume that $(A,B)$ is controllable and let a positive scalar $\beta$ be such that $-(A+\beta I_n)$ is Hurwitz.
Then the algebraic Lyapunov equation
\begin{align}\label{eq:Bass}
-(A+\beta I_n)X-X(A+\beta I_n)^T+2BB^T=0
\end{align}
has a unique solution $X_*$, which is positive definite. 
Moreover, the matrix $(A+BF_*)$ is Hurwitz where 
\begin{align}\label{eq:gain_Bass}
F_* = -B^T X_*^{-1}
\end{align}
and the response of $\dot x = (A+BF_*) x$ satisfies
\begin{equation}\label{eq:converge}
|x(t)| \le \sqrt{\frac{\lambda_{\max}(X_*^{-1})}{\lambda_{\min}(X_*^{-1})}} e^{-\beta t} |x(0)|, \qquad \forall t \ge 0.
\end{equation}
\end{lem}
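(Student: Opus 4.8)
The plan is to proceed in three stages: establish existence, uniqueness, and positive definiteness of $X_*$; verify the Hurwitz property of $A+BF$; and then derive the convergence estimate \eqref{eq:converge}. For the first stage, rewrite \eqref{eq:Bass} as the standard Lyapunov equation $(-(A+\beta I_n))X + X(-(A+\beta I_n))^T = -2BB^T$. Since $\beta > \max\{0, -r(A)\}$, every eigenvalue of $A+\beta I_n$ has real part $> 0$, hence $-(A+\beta I_n)$ is Hurwitz. The classical Lyapunov theory then gives a unique solution $X_* = \int_0^\infty e^{-(A+\beta I_n)t}\,2BB^T\,e^{-(A+\beta I_n)^T t}\,dt$, which is positive \emph{semi}definite automatically; to upgrade to positive definiteness I would invoke controllability of $(A,B)$ — equivalently controllability of $(A+\beta I_n, B)$, since a shift by a scalar multiple of the identity does not change the controllability matrix — so that the Gramian-like integral has no nontrivial null vector. (A vector $v$ with $v^T X_* v = 0$ forces $B^T e^{-(A+\beta I_n)^T t} v \equiv 0$, contradicting controllability.)

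For the second stage, set $F = -B^T X_*^{-1}$ and show $A+BF$ is Hurwitz by exhibiting $X_*^{-1}$ as a Lyapunov certificate. Multiply \eqref{eq:Bass} on both sides by $X_*^{-1}$: from $-(A+\beta I_n)X_* - X_*(A+\beta I_n)^T + 2BB^T = 0$ one gets
\begin{align*}
X_*^{-1}(A+\beta I_n) + (A+\beta I_n)^T X_*^{-1} &= 2 X_*^{-1} B B^T X_*^{-1}.
\end{align*}
Writing $A+BF = A - BB^T X_*^{-1}$ and computing $X_*^{-1}(A+BF) + (A+BF)^T X_*^{-1}$, the cross terms combine with the identity above to yield
\begin{align*}
X_*^{-1}(A+BF) + (A+BF)^T X_*^{-1} &= -2\beta X_*^{-1} - 2 X_*^{-1} B B^T X_*^{-1} \prec 0,
\end{align*}
since $\beta > 0$ and $X_*^{-1} \succ 0$. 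Thus $A+BF$ is Hurwitz. This same inequality is the key to the third stage: along $\dot x = (A+BF)x$, the Lyapunov function $V(x) = x^T X_*^{-1} x$ satisfies $\dot V = x^T\big(X_*^{-1}(A+BF)+(A+BF)^T X_*^{-1}\big)x \le -2\beta\, x^T X_*^{-1} x = -2\beta V$, hence $V(x(t)) \le e^{-2\beta t} V(x(0))$. Converting back via $\lambda_{\min}(X_*^{-1})\|x\|^2 \le V(x) \le \lambda_{\max}(X_*^{-1})\|x\|^2$ gives exactly \eqref{eq:converge}.

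I expect no serious obstacle here; this is a classical result and the argument is essentially self-contained modulo standard Lyapunov theory. The only point requiring a little care is the passage from positive semidefiniteness to positive definiteness of $X_*$, where controllability must be used explicitly (and one should note the shift-invariance of controllability under $A \mapsto A+\beta I_n$); everything after that is an exercise in pre- and post-multiplying the Lyapunov equation by $X_*^{-1}$ and reading off the sign. For the paper's purposes the main interest is not the proof itself but the observation that the stabilizing gain is obtained by solving a single Lyapunov equation, which is what the subsequent distributed implementation will exploit.
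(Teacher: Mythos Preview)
Your approach matches the paper's essentially line for line: multiply \eqref{eq:Bass} on both sides by $X_*^{-1}$ to obtain the Lyapunov certificate for $A+BF$, then use $V=x^TX_*^{-1}x$ for the decay bound; the paper simply invokes ``standard results'' where you spell out the Gramian argument for positive definiteness. One small correction: the displayed identity should read $X_*^{-1}(A+BF)+(A+BF)^TX_*^{-1}=-2\beta X_*^{-1}$ exactly (the $-2X_*^{-1}BB^TX_*^{-1}$ term cancels rather than survives), though since the spurious term you wrote is negative semidefinite your subsequent inequality $\dot V\le-2\beta V$ and the estimate \eqref{eq:converge} remain valid.
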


\begin{proof}
Existence of a unique positive definite solution $X_*$ is a standard result of algebraic Lyapunov equation.
For the rest, multiply the left and the right side of \eqref{eq:Bass} by $X_*^{-1}$, and obtain
\begin{align*}
X_*^{-1}(A-BB^TX_*^{-1})+(A-BB^TX_*^{-1})^TX_*^{-1}=-2\beta X_*^{-1}
\end{align*}
which proves that $(A+BF)$ is Hurwitz. 
Inequality \eqref{eq:converge} can be seen through the Lyapunov function $V(x) = x^TX_*^{-1}x$.
\end{proof}

By Lemma \ref{thm:Bass}, if $X_*$ is available to every agents, then each agent can compute
\begin{align}\label{eq:F_i_static}
	F_{*,i} = - B_i^T X_*^{-1}, \quad i = 1, \cdots, N
\end{align}
which does not need the knowledge of other $B_j$'s, where $F_* = \col_{i=1,\dots,N}(F_{*,i})$ by \eqref{eq:gain_Bass}.
However, the solution $X_*$ to \eqref{eq:Bass} contains information about all $B_j$'s.
So the question now is how to compute $X_*$ in a distributed manner.
To answer this question, we observe that $X_*$ can be asymptotically obtained by the differential Lyapunov equation:
\begin{align}\label{eq:difBass0}
\dot X = -(A+\beta I_n)X-X(A+\beta I_n)^T+2BB^T.
\end{align}
Here, a possible idea is to construct dynamics for agent $i$ (which has information of $A$ and $B_i$ only) as
\begin{multline}\label{eq:candidate0}
\dot X_i = -(A+\beta I_n)X_i-X_i(A+\beta I_n)^T +2NB_iB_i^T \\
+ \gamma \sum_{j \in \cN_i} (X_j - X_i),
\end{multline}
whose blended dynamics turns to be \eqref{eq:difBass0}.
However, one issue of \eqref{eq:candidate0} is that, while $X_*$ is the equilibrium of \eqref{eq:difBass0}, it is not an equilibrium for \eqref{eq:candidate0} (i.e., \eqref{eq:candidate0} does not hold with $X_i(t) \equiv X_*$, $\forall i \in \cN$).
Therefore, it is hopeless for \eqref{eq:candidate0} to have $X_i(t) \to X_*$ even if consensus of all $X_i$'s is enforced.

To solve this issue, we propose the following dynamics (instead of \eqref{eq:candidate0}) for each agent:
\begin{subequations}\label{eq:Z_i_X_i}
\begin{align}
\dot Z_i &= -\gamma_c \sum_{j\in\cN_i} (X_j-X_i) \label{eq:Z_i_X_i1} \\
\dot X_i &= k_c\left[-(A+\beta I_n) X_i - X_i (A+\beta I_n)^T + 2B_iB_i^T\right] \notag \\
&\qquad + \gamma_c \sum_{j\in\cN_i} (X_j-X_i) + \gamma_c \sum_{j\in\cN_i} (Z_j-Z_i) \label{eq:Z_i_X_i2}
\end{align}
\end{subequations}	
where $Z_i \in \bR^{n \times n}$ and $X_i \in \bR^{n \times n}$,
and $\beta$ is set as in Lemma~\ref{thm:Bass}.
We assume that all $X_i(0)$ and $Z_i(0)$ are symmetric. 
Then they remain symmetric for all $t\ge0$ thanks to the symmetric right-hand side of \eqref{eq:Z_i_X_i}.
Unlike \eqref{eq:candidate0}, a PI (proportional-integral) type coupling is employed in that $Z_i$ plays the role of integrator's state, and both $Z_i$ and $X_i$ are communicated to the neighbors.
The coupling gain is $\gamma_c > 0$, and the parameter $k_c > 0$ is introduced to adjust the speed of the algorithm.
Note that the number $N$ does not appear in \eqref{eq:Z_i_X_i} unlike in \eqref{eq:candidate0}.
This change yields a shorter response time of the overall controller that will be designed in the next section, and the convergence is now, instead of $X_i(t) \to X_*$,
\begin{align}\label{eq:X_i_X^*/N}
	X_i(t) \rightarrow \frac{X_*}{N} \quad \text{as} \quad t \rightarrow \infty, \quad \forall i \in \cN.
\end{align}
The next theorem shows that \eqref{eq:X_i_X^*/N} is the case if $\gamma_c>0$ and $k_c>0$.
Moreoever, the convergence is exponential whose speed can be made arbitrarily fast by increasing both $\gamma_c$ and $k_c$.

\begin{thm}\label{thm:pi}
Define the error variable (matrix) $\eta_c$ as	
\begin{align}\label{eq:eta_c}
	\eta_c(t) := \begin{bmatrix}
		\col_{i=1}^N(X_i(t)-\frac{X_*}{N}) \\
		(R^T \otimes I_{n})\col_{i=1}^N(Z_i(t))-\tilde Z_*
	\end{bmatrix}		
\end{align}
where 
$$\tilde Z_* := (k_c/\gamma_c) ((\Lambda^+)^{-1}R^T\otimes I_{n}) \col_{i=1}^N(2B_iB_i^T).$$
If Assumptions (T1), (T2), and (T3) hold, and if $k_c >0$ and $\gamma_c >0$, then $\exists m_c >0$ and $\lambda_c >0$ s.t.
\begin{align}\label{eq:m_lambda_eta_c}
\|\eta_c(t)\| \le m_c e^{-\lambda_c t}\|\eta_c(0)\|,\quad\forall t \ge 0.
\end{align}
In addition, for given $\lambda_c > 0$, \eqref{eq:m_lambda_eta_c} holds if
\begin{align}\label{eq:DistBass_kgamma}
	 k_c \ge \lambda_{\max}(P)\lambda_c, \quad \gamma_c \ge \frac{6+\sqrt{4+\|P\|^2\|\bar A\|^2}}{2\lambda_2(\cL)\lambda_{\min}(P)}k_c
\end{align}
where
$P \in \bR^{n^2 \times n^2}$ is the positive definite matrix such that
\begin{align}\label{eq:Lyap_barA}
P\bar A + \bar A^T P=-2I_{n^2}
\end{align}
with $\bar A:=-((A+\beta I_n)\oplus(A+\beta I_n))$. 

\begin{comment}
(i) For any $k_c >0$ and $\gamma_c >0$, $\exists m_c >0$ and $\lambda_c >0$ s.t.
\begin{align}\label{eq:m_lambda_eta_c}
\|\eta_c(t)\| \le m_c e^{-\lambda_c t}\|\eta_c(0)\|,\quad\forall t \ge 0.
\end{align}

(ii) For given $\lambda_c > 0$, \eqref{eq:m_lambda_eta_c} holds with some $m_c>0$ if
\begin{align}\label{eq:DistBass_kgamma}
	 k_c \ge \lambda_{\max}(P)\lambda_c, \quad \gamma_c \ge \frac{6+\sqrt{4+\|P\|^2\|\bar A\|^2}}{2\lambda_2(\cL)\lambda_{\min}(P)}k_c
\end{align}
where
$P \in \bR^{n^2 \times n^2}$ is the positive definite matrix such that
\begin{align}\label{eq:Lyap_barA}
P\bar A + \bar A^T P=-2I_{n^2}
\end{align}
with $\bar A:=-((A+\beta I_n)\oplus(A+\beta I_n))$. 
%Here, $\oplus$ denotes the Kronecker sum.
\end{comment}
\end{thm}

Proof of Theorem \ref{thm:pi} is given in Appendix \ref{app:2}.
The theorem specifies the convergence property of not only $X_i$ but also $Z_i$.

Now, we may set the feedback gain $F_i$ as
\begin{equation}\label{eq:F_i0}
F_i(t) = -\frac{1}{N} B_i^T (X_i(t))^{-1}
\end{equation}
which converges to $F_{*,i}$ in \eqref{eq:F_i_static} based on \eqref{eq:X_i_X^*/N}.
Unfortunately, during the transient period of the convergence, $X_i(t)$ may not even be invertible, so that \eqref{eq:F_i0} is not well-defined.
To avoid this problem, we introduce a filter whose operation resembles the zero-order hold with a sampling period $T$ as
\begin{align}
&\dot \Phi_i^X = 0_{n \times n}, & & \text{when$\!\!\!\!\mod(t,T) \not = 0$} \notag \\ 
&\Phi_i^X \leftarrow \begin{cases} \Phi_i^X, & \text{if $\det(X_i(t)) = 0$,} \\
X_i(t), & \text{if $\det(X_i(t)) \not = 0$,} \end{cases} & & \text{when$\!\!\!\!\mod(t,T) = 0$} \notag \\
&\underline X_i(t) := \Phi_i^X(t) \label{eq:Phi}
\end{align}
with any invertible initial condition $\Phi_i^X(0)$, where ${\rm mod}$ implies the modular operation.\footnote{In this paper, a discontinuous function is assumed to be right-continuous; for example, $\Phi_i^X(t) = \lim_{\epsilon \downarrow 0} \Phi_i^X(t+\epsilon)$, $\forall t$.}
Then, the state feedback gain $F_i$ is defined (instead of \eqref{eq:F_i0}) as
\begin{align}\label{eq:F_i1}
F_i(t) = -\frac{1}{N} B_i^T (\underline X_i(t))^{-1}, \qquad i \in \cN
\end{align}
which has the property that $\lim_{t\to\infty}F_i(t) = -B_i^T X_*^{-1} = F_{*,i}$ because $\underline X_i^{-1}$ exists for all time and $\lim_{t\to\infty} \underline X_i^{-1} = N X_*^{-1}$.

On the other hand, we can similarly obtain the injection gains $L_i$ asymptotically.
Indeed, the gain 
\begin{equation}\label{eq:Li}
L_{*,i} = -Y_*^{-1}C_i^T, \qquad i \in \cN,
\end{equation}
where $Y_*$ is the solution to the dual of \eqref{eq:Bass}:
\begin{align}\label{eq:Bass_dual}
	-(A^T+\beta I_n)Y-Y(A^T+\beta I_n)^T+2C^TC=0,
\end{align}
makes $(A+L_* C)$ Hurwitz.
Therefore, with
\begin{subequations}\label{eq:W_i_Y_i}
\begin{align}
&\dot W_i = -\gamma_o \sum_{j\in\cN_i} (Y_j-Y_i)  \\
&\dot Y_i = k_o[-(A^T+\beta I_n) Y_i - Y_i (A^T+\beta I_n)^T + 2C_i^TC_i] \notag \\
&\qquad + \gamma_o \sum_{j\in\cN_i} (Y_j-Y_i) + \gamma_o \sum_{j\in\cN_i} (W_j-W_i)
\end{align}
\end{subequations}
where $W_i(0)$ and $Y_i(0)$ are chosen to be symmetric, $\gamma_o>0$ and $k_o > 0$, we obtain the symmetric solution  and $Y_i(t) \to Y_*/N$ as $t \to \infty$ as given in Theorem \ref{thm:pi}. Therefore, by employing the following filter
\begin{align}
	&\dot \Phi_i^Y = 0_{n \times n}, \qquad\qquad\qquad\qquad\qquad~\text{when$\!\!\!\!\mod(t,T) \not = 0$} \notag \\ 
	&\Phi_i^Y \leftarrow \begin{cases} \Phi_i^Y, & \text{if $\det(Y_i(t)) = 0$,} \\
		Y_i(t), & \text{if $\det(Y_i(t)) \not = 0$,} \end{cases}  \quad\text{when$\!\!\!\!\mod(t,T) = 0$} \notag \\
	&\underline Y_i(t) := \Phi_i^Y(t)\label{eq:PhiY}
\end{align}
with any invertible initial condition $\Phi_i^Y(0)$, the injection gain $L_i$ is given by
\begin{equation}\label{eq:Li1}
L_i(t) = -\frac1N (\underline Y_i(t))^{-1} C_i^T,\qquad \forall i\in\mathcal N.
\end{equation}
In summary, if the distributed control of a fixed graph in which no joining/leaving agents are of interest, the proposed control agent that has the identical dynamics given by \eqref{eq:prop}, \eqref{eq:Z_i_X_i}, \eqref{eq:F_i1}, \eqref{eq:W_i_Y_i}, \eqref{eq:Li1}, \eqref{eq:Phi}, and \eqref{eq:PhiY} for $X_i$ and $Y_i$ would stabilize the plant under Assumptions (T1), (T2), and (T3).
This control features distributed operation (F1) and decentralized design (F2) under (T1), (T2), and (T3).

\subsection{Removing Assumption (T1): Distributed Computation of $N$}

The aforementioned algorithm needs the information of $N$, the number of active agents in the network, in \eqref{eq:prop}, \eqref{eq:F_i1}, and \eqref{eq:Li1}.
This number $N$ is the global information in terms of individual control agents, and it varies with time if some agents join or leave during the operation.
Therefore, in order to extend the aforementioned algorithm for the plug-and-play operation (F3), the number $N$ needs to be estimated in a distributed manner by each agent only through a local communication with their neighbors.
%Since the number $N$ of participating agents is a global information, let us present a way to estimate it in a distributed manner.
In fact, there are a few methods for this purpose in the literature, e.g., \cite{IShames2012,CBaquero2012} but we employ the idea of the network size estimator \cite{DLee2018} which is again based on the blended dynamics theorem.
This is because it does fit into our purpose of the plug-and-play operation.

In this subsection, we present a distributed computation of $N$ assuming that the graph is fixed and Assumption (T2) holds.
%In this subsection, we present an algorithm to estimate the number $N$, which is improved over \cite{DLee2018} in the sense that PI couping is used for the estimation without 
%To pursue decentralized design, we employ PI coupling and eliminate the disadvantage of \cite{DLee2018}, where each agent should know the upper bound of the number of agents for implementation.
Suppose that each control agent includes the dynamics
\begin{align}\label{eq:d_zeta_i}
\begin{split}
\dot \psi_i &= -\gamma_s \sum_{j\in\mathcal N_i}(\nu_j-\nu_i) \\
\dot \nu_i &= k_s + \gamma_s \sum_{j\in\cN_i}(\nu_j-\nu_i) + \gamma_s \sum_{j\in\mathcal N_i}(\psi_j-\psi_i)
\end{split}
\end{align}
where $\psi_i \in \bR$, $\nu_i \in \bR$, $\gamma_s\in\bR$ is the coupling gain, and $k_s\in\bR$ is the scaling factor.
Moreover, suppose that an additional agent (called \emph{informer} and labeled as 0) is added to the network, whose dynamics is
\begin{align}\label{eq:d_zeta_0}
\begin{split}
\dot \psi_0 &= -\gamma_s \sum_{j\in\mathcal N_0}(\nu_j-\nu_0) \\
\dot \nu_0 &= - k_s \nu_0 + \gamma_s \sum_{j\in\cN_0}(\nu_j-\nu_0) + \gamma_s \sum_{j\in\mathcal N_0}(\psi_j-\psi_0)
\end{split}
\end{align}
where $\cN_0$ is the index set of the neighbors of the agent 0.
Since agent 0 is added with bidirectional communication to any agent/agents of the connected graph $\cG$, the overall graph $\bar \cG$ having $N+1$ agents is also connected.
In this case, the blended dynamics becomes
\begin{align}\label{eq:zeta_blended}
	\dot s = -\frac{k_s}{N+1}s + \frac{k_sN}{N+1}.
\end{align}
Since it is a stable dynamics whose solution $s(t)$ converges to $N$, we expect that
\begin{align*}
\nu_i(t) \rightarrow N \quad \text{as} \quad t \rightarrow \infty, \quad i=0,1,\cdots,N
\end{align*}
which is the case if $\gamma_s>0$ and $k_s>0$.
This is guaranteed in the following theorem, whose proof is in Appendix \ref{app:3}.
% $\nu_i(t) \to N$ as $t \to \infty$, $\forall i=0,\dots,N$.
%The following lemma justifies this is the case.

%%%%%%%%%%%
\begin{thm}\label{thm:DLee2018}
Define the error vector
\begin{align}\label{eq:eta_s}
\eta_s := \begin{bmatrix}
		\col_{i=0}^N(\nu_i(t)-N) \\
		\bar R^T \col_{i=0}^N(\psi_i(t))-\tilde\psi_*
	\end{bmatrix}		
\end{align}
where $\bar R\in\bR^{(N+1)\times N}$ is a matrix satisfies \eqref{eq:R} with respect to the graph Laplacian $\bar\cL$ of $\bar\cG$ and
\begin{align}
\tilde \psi_* := \frac{k_s}{\gamma_s} (\bar \Lambda^+)^{-1}\bar R^T\begin{bmatrix}-N\\ 1_N\end{bmatrix}.
\end{align}
Under Assumption (T2), the network of \eqref{eq:d_zeta_i} and \eqref{eq:d_zeta_0} with $k_s >0$ and $\gamma_s >0$ is exponentially stable in the sense that
\begin{align}\label{eq:m_lambda_eta_s}
|\eta_s(t)| \le m_s e^{-\lambda_s t}|\eta_s(0)|, \quad \forall t \ge 0
\end{align}
with $m_s > 0$ and $\lambda_s > 0$.
In addition, for given $\lambda_s > 0$, \eqref{eq:m_lambda_eta_s} holds if
\begin{align}\label{eq:DistN_kgamma}
	k_s\geq \frac{24N+30+2\sqrt 5}{(2-\sqrt 2)}\lambda_s,\quad \gamma_s> \frac{(N+1)k_s}{\lambda_2(\bar\cL)}.
\end{align}

\begin{comment}
(i) For any $k_s >0$ and $\gamma_s >0$, $\exists m_s >0$ and $\lambda_s >0$ s.t.
\begin{align}\label{eq:m_lambda_eta_s}
|\eta_s(t)| \le m_s e^{-\lambda_s t}|\eta_s(0)|, \quad \forall t \ge 0.
\end{align}

(ii) For given $\lambda_s > 0$, \eqref{eq:m_lambda_eta_s} holds with some $m_s>0$ if
\begin{align}\label{eq:DistN_kgamma}
	k_s\geq \frac{24N+30+2\sqrt 5}{(2-\sqrt 2)}\lambda_s,\quad \gamma_s> \frac{(N+1)k_s}{\lambda_2(\bar\cL)}.
\end{align}
\end{comment}
\end{thm}

Even if $\nu_i(t)$ converges to $N$, its value may become zero during its transient.
Therefore, we introduce a static filter as
\begin{equation}\label{eq:nfilter}
\underline\nu_i(t) := \Phi_i^\nu(t) = \max\{\nu_i(t), 0.5 \}.
\end{equation}

%%%%%%%%%%%%%%%%%%%%%
\subsection{The Proposed Control Agent}
%%%%%%%%%%%%%%%%%%%%%

Putting all the findings so far together, the identical control agent is now presented as Algorithm \ref{algo}.

\begin{algorithm}
\caption{An agent for channel $i$ performs:}\label{algo}

%\noindent\rule{\columnwidth}{0.5pt}

\noindent\textbf{Preset:} $\gamma_c\!>\!0$, $k_c\!>\!0$, $\gamma_o\!>\!0$, $k_o\!>\!0$, $\gamma_s\!>\!0$, $k_s\!>\!0$, $T\!>\!0$

\noindent\textbf{Require:} Informer agent 0 always runs \eqref{eq:d_zeta_0} in the network.

\noindent\textit{When joins for channel $i$:} Set $(A, B_i, C_i)$, $\beta$, $\Phi_i^X$, $\Phi_i^Y$

\noindent\rule{\columnwidth}{0.5pt}

\noindent{\bf Input:} $y_i$

\noindent{\bf Dynamics:}
	\begin{itemize}
	\item \eqref{eq:Z_i_X_i} with \eqref{eq:Phi} yielding $\underline X_i(t)$
	\item \eqref{eq:W_i_Y_i} with \eqref{eq:PhiY} yielding $\underline Y_i(t)$
	\item \eqref{eq:d_zeta_i} with \eqref{eq:nfilter} yielding $\underline \nu_i(t)$
	\item a distributed state observer:
		\begin{align}
			\hspace{-0.5cm} \dot{\hat x}_i = (A + \underline \nu_i B_i \underline F_i) \hat x_i + \underline \nu_i \underline L_i (C_i \hat x_i - y_i) + \gamma_i \sum_{j \in \cN_i} (\hat x_j-\hat x_i) \label{eq:prop1}
		\end{align}
	where
	\begin{align}
			&\underline F_i(t)=-\frac{1}{\underline \nu_i}B_i^T\underline X_i^{-1}\label{eq:F_i(t)}\\
			&\underline L_i(t)=-\frac{1}{\underline \nu_i}\underline Y_i^{-1}C_i^T\label{eq:L_i(t)}\\
			&\gamma_i(t) = 1+\frac{\underline\nu_i^2}{4}\bigg(\theta_i+\theta_i^2\kappa_i+4\|\underline X_i^{-1}\|^2\kappa_i\sqrt{1+\theta_i^2\kappa_i^2}\,\bigg)\notag
	\end{align}
	with
	\begin{align}\label{eq:tk}\hspace{-2mm}\begin{split}
				\theta_i(t) &=\|A\|+\|\underline Y_i^{-1}\|+2\|\underline X_i^{-1}\|\\
				\kappa_i(t)
				&=\frac{1}{\beta}\|\diag(\frac{\underline X_i^{-1}}{\underline\nu_i},\underline\nu_i\underline Y_i)\|\|\diag(\underline\nu_i\underline X_i,\frac{\underline Y_i^{-1}}{\underline\nu_i})\|.
				\end{split}
	\end{align}
	\end{itemize}
\noindent{\bf Communicate:} $\nu_i$, $\psi_i$, $Z_i$, $X_i$, $W_i$, $Y_i$, $\hat x_i$

\noindent{\bf Output:} $u_i = \underline F_i \hat x_i$

\end{algorithm}

Compared to \eqref{eq:prop}, the dynamics \eqref{eq:prop1} shows that the global information such as $\gamma$ is replaced by its local correspondence.
While the preset parameters are embedded in all control agents when they are constructed, it is supposed that $B_i$ and $C_i$ are learned when any agent happens to be linked to a channel (say, the channel $i$ for convenience).
The parameter $A$ can be preset, or learned as $B_i$ and $C_i$. 
The parameter $\beta$ can be preset, or set when $A$ is learned if a rule to choose $\beta$ is preset; 
\begin{equation}\label{eq:preset}
\text{e.g.,} \quad \beta = 1 + \max\{0, - \min_{1\le j\le n}\{ {\rm Re}(\lambda_j(A)) \}\}
\end{equation}
so that $A+\beta I_n$ is Hurwitz.
Initial conditions for all dynamic equations are freely chosen except that $\Phi_i^X(0)$ and $\Phi_i^Y(0)$ for all $i$ should be any nonsingular matrix.
This is for the system \eqref{eq:Phi} to generate nonsingular matrices forever, and not a restriction.
If an agent has to leave the network (by intention or by accident), then it can leave abruptly without any particular handshaking procedure.

From now on, let us suppose that some agents join or leave the network during the control operation, and let $\{t_k : k \in \bN\}$ be the time sequence of joining/leaving events with $t_0 = 0$ being the initial time of operation.
Also, let $\cN(t)$ be the index set of active agents at time $t$, which is right-continuous, and let $|\cN(t)|$ be the cardinality of $\cN(t)$.
Now, instead of Assumptions (T2) and (T3), we assume:
\begin{assmpt}\label{ass:PnP_1}
%Even if agents join/leave the network, 
For all $t \ge 0$,
\begin{enumerate}
\item the communication network $\cG(t)$ among the active agents is undirected, unweighted, and connected, and
\item the plant $(A,\row_{i \in \cN(t)}(B_i), \col_{i \in \cN(t)}(C_i))$ is controllable and observable.
\end{enumerate}
\end{assmpt}

Now suppose an event of joining/leaving occurs at some time $t_k$ and no more event occurs afterward.
Then, the following theorem shows asymptotic behavior of the closed-loop system.

\begin{thm}\label{thm:main}
Suppose that Assumption \ref{ass:PnP_1} holds, the control agents run Algorithm \ref{algo}, and the set $\cN(t)$ remains the same for all $t \ge t_k$, so that we define $\cN^k := \cN(t)$.
Then, the closed-loop consisting of the plant \eqref{eq:plant} and the distributed observer \eqref{eq:prop1} tends to the LTI system consisting of \eqref{eq:plant} and \eqref{eq:prop}.
%the controller
%\begin{align}
%\dot {\hat x} = (A+BF_*)\hat x + L_*(C\hat x-y), \quad u = F_* \hat x \label{eq:nominal}
%\end{align}
%where $y = \col_{i \in \cN_k}(y_i)$, $u = \col_{i \in \cN_k}(u_i)$, $C = \col_{i \in \cN_k}(C_i)$, $B = \row_{i \in \cN_k}(B_i)$, $F_* = \col_{i \in \cN_k}(F_{*,i})$, and $L_* = \row_{i \in \cN_k}(L_{*,i})$.
Moreover, the time-varying closed-loop dynamics of %$x$ and $\hat x_i$, $i \in \cN_k$, 
\eqref{eq:plant} and \eqref{eq:prop1} 
is asymptotically stable, and thus, the plant's state $x$ converges to the origin.
\end{thm}

\begin{proof}
To prove asymptotic stability of the time-varying closed-loop, we employ \cite[Lemma 2.1]{Bai87}\footnote{The origin of a time-varying linear system $\dot x = A(t) x$ is asymptotically stable if $A(t)$ is bounded and $\lim_{t \to \infty} A(t)$ exists and is Hurwitz.}. 
To use this assertion, we note from Theorems 2 and 3 that, for all $i \in \cN^k$, the following limits hold: $\underline \nu_i(\infty) = |\cN^k|=:N$, $\underline F_i(\infty)=-B_i^TX_*^{-1}$, and $\underline L_i(\infty)=-Y_*^{-1}C_i^T$, where $X_*$ and $Y_*$ are the solutions of Bass' equations \eqref{eq:Bass} and \eqref{eq:Bass_dual} with $\row_{i \in \cN^k}(B_i)$ and $\col_{i \in \cN^k}(C_i)$ instead of $B$ and $C$, respectively.
Moreover, $\theta_i(\infty)$, $\kappa_i(\infty)$, and $\gamma_i(\infty)$ in Algorithm \ref{algo} are the same for all $i \in \cN^k$, respectively, as
\begin{align}
\theta_i(\infty) &= \|A\|+N\|Y_*^{-1}\|+2N\|X_*^{-1}\| =: \theta_* \notag\\
\kappa_i(\infty) &=\frac{\lambda_{\max}(\diag(X_*^{-1},Y_*))}{\beta\lambda_{\min}(\diag(X_*^{-1},Y_*))}=: \kappa_* \notag\\
\gamma_i(\infty) &= 1\!\!+\!\!\frac{N^2}{4}\bigg[\theta_*\!\!+\!\theta_*^2\kappa_*\!\!+N^2\left\|X_*^{-1}\right\|^2\kappa_*\!\sqrt{1+\theta_*^2\kappa_*^2}\,\bigg]\notag\\
&=:\gamma_*.\label{eq:gamma_infty}
\end{align}
Now, the closed-loop \eqref{eq:plant} and \eqref{eq:prop1} with limit parameters $\underline F_i(\infty)$, $\underline L_i(\infty)$, $\underline \nu_i(\infty)$, and $\gamma_i(\infty)$ is an LTI system, which is equal to \eqref{eq:plant} and \eqref{eq:prop} with the parameters $F_i$, $L_i$, and $\gamma$ replaced by \eqref{eq:F_i_static}, \eqref{eq:Li}, and $\gamma_*$, respectively, and is exponentially stable according to Theorem \ref{thm:separation}; it is seen that $\theta_* \ge \theta$ because of \eqref{eq:B_i_C_i}; $\kappa = \kappa_*$ with $M_1=X_*^{-1}$, $Q_1=\beta X_*^{-1}$, $M_2=Y_*$, and $Q_2=\beta Y_*$; and therefore, $\gamma_* > \bar\gamma$ with \eqref{eq:Mohar91} and \eqref{eq:B_i_C_i}.
%Theorem \ref{thm:separation} guarantees exponential stability 
%Then, because Theorem \ref{thm:separation} guarantees exponential stability of the LTI system of \eqref{eq:plant} and \eqref{eq:prop} with the parameters $F_i$, $L_i$, $N$, and $\gamma$ replaced by $\underline F_i(\infty)$, $\underline L_i(\infty)$, $|\cN^k|$, and $\gamma_*$, respectively.
%closed-loop is exponentially stable, and therefore, by \cite[Lemma 2.1]{Bai87}, 
Then, the asymptotic stability of the time-varying closed-loop \eqref{eq:plant} and \eqref{eq:prop1} follows by \cite[Lemma 2.1]{Bai87}.
\end{proof}

It is obvious that the proposed distributed control, i.e., the network of active control agents running Algorithm \ref{algo}, features (F1) distributed operation and (F2) decentralized design under Assumption \ref{ass:PnP_1}.
It also features (F3) plug-and-play in the sense that, whenever some agents join/leave, a new asymptotically stable closed-loop of \eqref{eq:plant} and \eqref{eq:prop1} is formed without manipulating the initial conditions of all active agents in the network.
% no manipulation  without the need to reset the initial conditions of all active agents in the network.
A practical example is presented in the next section to demonstrate these features.

The closed-loop system can be regarded as a kind of switched system, in which the switching is triggered by the joining/leaving agents.
A transient is expected immediately after each switching, but the closed-loop remains stable as long as no more switching occurs (as shown by Theorem \ref{thm:main}).
This also implies that, as long as the switching does not occur too often, the closed-loop still remains stable.

%\begin{rem}
It is well-known that, even if each mode of a switched system is exponentially stable, there may exist a switching signal which makes the system unstable \cite{Liberzon}.
This happens when the event that a new switching occurs during the transient of the previous switching, repeats endlessly.
A well-known remedy to this pathology is to restrict the switching not to occur too often such that $t_{k+1}-t_k \ge \tau_d > 0$ where $\tau_d$ is called a {\em dwell-time} \cite{Liberzon}.
In practice, it is helpful to shorten the transient period by increasing the gains $\gamma_c$, $k_c$, $\gamma_o$, $k_o$, $\gamma_s$, and $k_s$,
%In other words, the switched system is exponentially stable if it is exponentially stable at each mode and if $
%{\color{red}By shortening transient, we may be in a better situation...}
%\end{rem}
%\begin{rem}
%	In order to reduce the transient period, one can take larger $\gamma_c$, $\kappa_c$, $\gamma_o$, $\kappa_o$, $\gamma_s$, and $\kappa_s$ 
according to Theorem \ref{thm:pi} and Theorem \ref{thm:DLee2018}, so that $X_i(t)$, $Y_i(t)$, and $\nu_i(t)$ converge more quickly. 
On the other hand, taking larger $\beta$ also accelerates the convergence of $x(t)$ to the origin by Theorem~\ref{thm:Bass}.
%\end{rem}

Unlike the conventional switched systems, the proposed distributed control does not have a dwell-time unless more restrictions are imposed.
For example, since a newly joining agent can have arbitrarily large initial conditions, the transient period may become arbitrarily large so that any finite dwell-time becomes invalidated in view of stability.
Also, since the proposed control does not limit the number of active agents, the transient period may become very large if large number of new agents join at the same time.
In Section \ref{sec:dwell}, we compute a dwell-time for the proposed distributed control after imposing a few restrictions to avoid the pathological cases.

%%%%%%%%%%%%%%%%%%%%%%%%%%%%
\section{An Illustrative Example}\label{sec:casestudy}
%%%%%%%%%%%%%%%%%%%%%%%%%%%%

\begin{figure}[t]
	\centering
%	\subfigure[]{
		\includegraphics[width=0.21\textwidth]{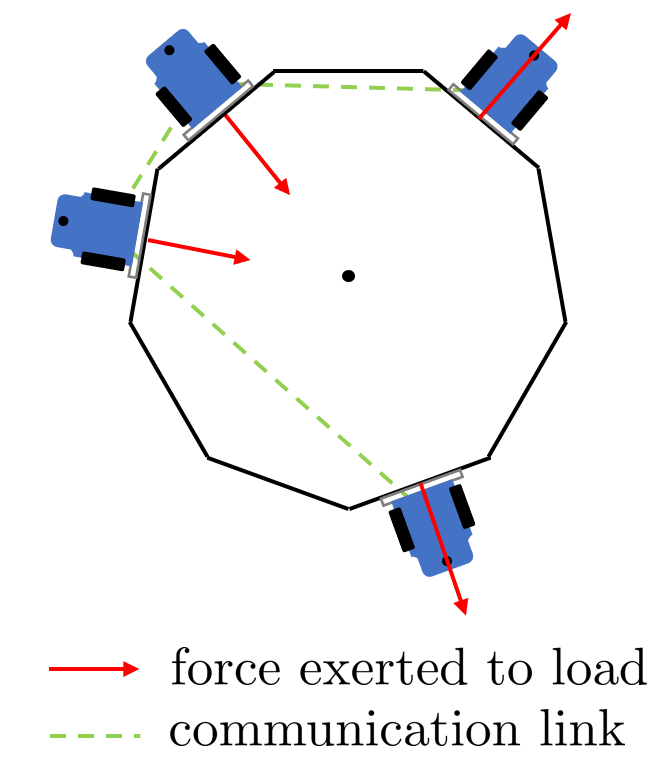}
%		\label{fig:config1}
%	}
%	\subfigure[]{
		\includegraphics[width=0.24\textwidth]{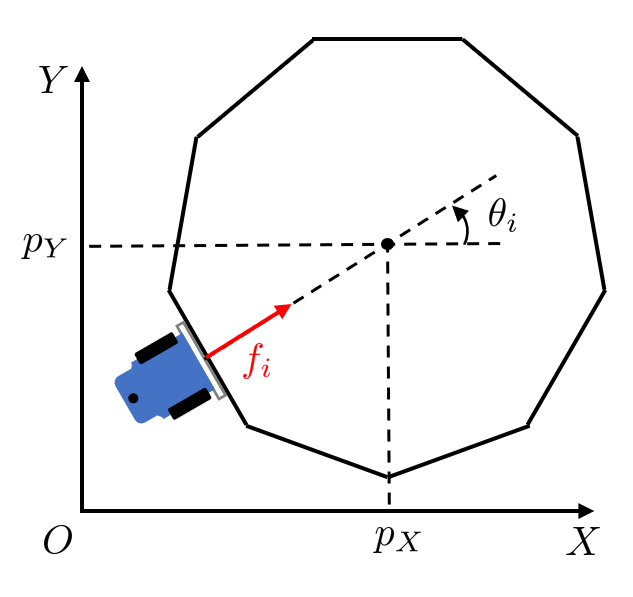}
%		\label{fig:config2}
%	}
	\caption{Agents performing cooperative transportation: Each agent occupies one edge of the load and exerts a push or pull force toward the center of load.}
	\label{fig:config}
\end{figure} 

To illustrate the utility of the proposed distributed control, let us consider the problem of cooperative load transportation by networked mobile robots as an example.
The load is assumed to be in a shape of a regular odd-number-sided polygon and, at each edge, one mobile robot can be attached (see Fig.~\ref{fig:config}). 
Each robot can exert pushing or pulling force $f_i$ in the normal direction, denoted as $[\cos\theta_i, \sin\theta_i]^T \in \bR^2$. %, as illustrated in Fig.~\ref{fig:config}.
We also restrict our attention to the case where each force $f_i$ is heading towards the center of load so that the load has only the translational motion without rotation, which simplifies the problem.
The goal is to transport the load to the desired location $p^d=[p_X^d,p_Y^d]^T\in\bR^2$ in a distributed fashion.
Let $p=[p_X,p_Y]^T\in\bR^2$ and $v\in\bR^2$ denote the position and the velocity of the load's center, respectively.
It is assumed that the load can measure the relative position $(p-p^d)$ and deliver this information to the attached robots who cannot measure their location.
No one can measure the velocity.
Then the dynamics of the load is of the form \eqref{eq:plant} with state $x:=[(p-p^d)^T,v^T]^T\in\bR^4$ and
\begin{align*}
	A&=\begin{bmatrix}
		0 & 0 & 1 & 0 \\
		0 & 0 & 0 & 1 \\
		0 & 0 & 0 & 0 \\
		0 & 0 & 0 & 0 \\
		\end{bmatrix},\;
	B_i=\frac{1}{M}\begin{bmatrix}
		0 \\ 0 \\ \cos\theta_i \\ \sin\theta_i
	\end{bmatrix},\;
	C_i=\begin{bmatrix}
		1 & 0 \\
		0 & 1 \\
		0 & 0 \\
		0 & 0
	\end{bmatrix}^T
\end{align*}
with $M\in\bR$ being the mass of the load.
The goal is achieved if we distributedly stabilize the origin of the multi-channel plant described above.

Note that none of individual triplet $(A,B_i,C_i)$ is controllable but any two pair of $B_i$'s and $C_i$'s ensure controllability and observability, so that Assumption \ref{ass:PnP_1} holds with at least two robots.
We assume that the informer is equipped in the load, and each robot can read $A$, $B_i$, and $C_i$ when the robot is attached at any edge.
The plant parameters are $M=1$ and $p^d=[100,150]^T$, and the preset parameters of Algorithm \ref{algo} are $k_s=\gamma_s=k_c=\gamma_c=k_o=\gamma_o=3$ and $\beta=0.7$.
The simulation scenario is depicted in Fig.~\ref{fig:scene}, and its results are in Fig.~\ref{fig:result1}, Fig.~\ref{fig:result2}, and Fig.~\ref{fig:result3}.
Another simulation has been performed with 100 control agents for 31-polygon as in Fig.~\ref{fig:video}.
In all simulations, we added measurement noise of intensity 1 when measuring $y_i$, and added random noise of magnitude $10^{-6}$ to all communicated quantities in order to take into account small delays and quantizations.

%addition, we consider the case where the measurement of the relative position and all signals that agents receive from neighbors are corrupted by additive noise with intensity of $D=10^{-6}$ (i.e., agent $i$ runs \eqref{eq:prop1} with $\hat x_j(t)+w_j^i(t)$ instead of $\hat x_j(t)$, where $\|w_j^i(t)\|_{\infty}\leq D$ holds for all $t\geq 0$).} 
%, which is corrupted by the uniformly distributed noise whose elements are in the range $[-1,1]$}, 

\begin{figure}[t]
	\centering
	\subfigure[]{
		\includegraphics[width=0.14\textwidth]{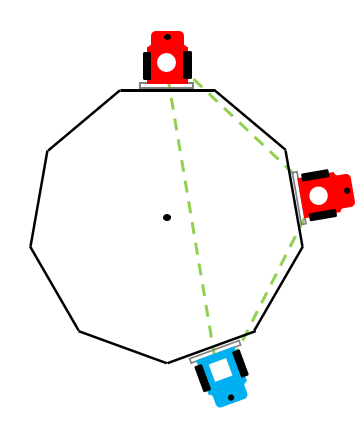}
	}
	\subfigure[]{
		\includegraphics[width=0.14\textwidth]{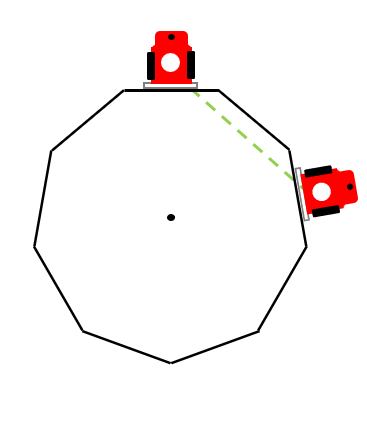}
	}
	\subfigure[]{
		\includegraphics[width=0.14\textwidth]{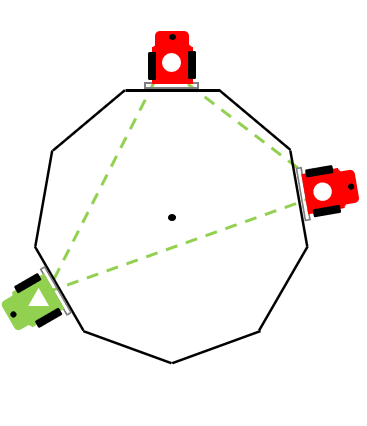}
	}
	\caption{The simulation starts with three agents as (a). At $t=3$, a blue agent leaves the network as (b). At $t=7$, a green agent joins as (c).}
	\label{fig:scene}
\end{figure}

\begin{figure}
	\centering
	\includegraphics[width=1\columnwidth]{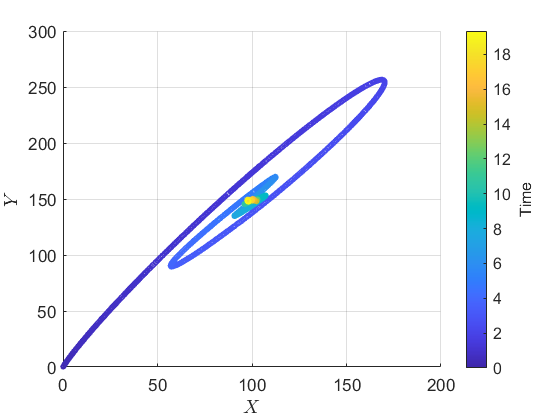}
	\caption{Trajectory of the position $p(t)$ of the load over time.}
	\label{fig:result1}
\end{figure}

\begin{figure}
	\centering
	\includegraphics[width=1\columnwidth]{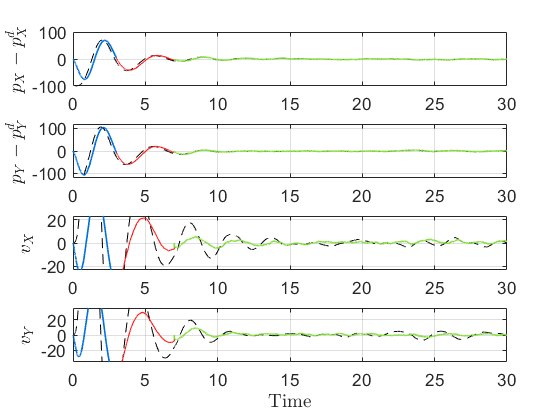}
	\caption{State $x(t)=[(p(t)-p^d)^T,v(t)^T]^T$ is drawn as a black dashed curve while one of its estimates $\hat x_i$ is drawn as colored solid curves.}
	\label{fig:result2}
\end{figure}

\begin{figure}
	\centering
	\includegraphics[width=1\columnwidth]{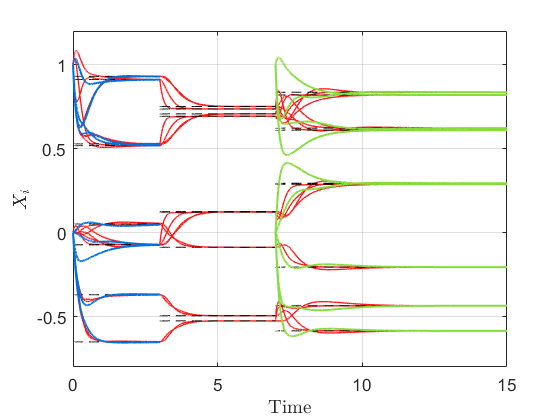}
	\caption{16 elements of $X_*/|\mathcal N(t)|$ are drawn as black dashed lines, and 16 elements of one $X_i(t)$ are drawn as colored solid curves.}
	\label{fig:result3}
\end{figure}

\begin{figure}
	\centering
	\includegraphics[width=.8\columnwidth]{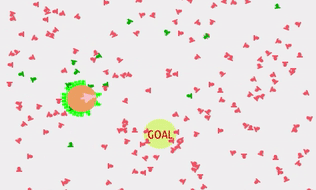}
	\small{\url{https://www.youtube.com/watch?v=ZIWL7xUXNVA}}
	\caption{Green agents are active, attached to the orange load, while the red agents are wandering around the load. Darkgreen agents are those who left the load with their battery empty. Animation is available at the above URL.} 
	\label{fig:video}
\end{figure}

%%%%%%%%%%%%%%%%%%%%%%%%%%%%%%%%%%%%%%%%%%%%%%%%%%%%%%%%%%%%%
\section{Dwell-time Analysis} % for Asymptotic Convergence}
\label{sec:dwell}
%%%%%%%%%%%%%%%%%%%%%%%%%%%%%%%%%%%%%%%%%%%%%%%%%%%%%%%%%%%%%

%Analysis on the dwell-time for closed-loop stability is based on the proof of Theorem \ref{thm:separation}.
%In order to be specific, 

\subsection{Restrictions imposed for the existence of a dwell-time}

The closed-loop system consists of the plant state $x$ and the controller states $\hat x_i$, $Z_i$, $X_i$, $\Phi_i^X$, $W_i$, $Y_i$, $\Phi_i^Y$, $\psi_i$, and $\nu_i$, for all $i \in \cN(t)$.
And it is immediately seen that the sizes of the states vary depending on the time $t$.
Nevertheless, since $\cN(t) =: \cN^k$ remains the same during the $k$-th interval $[t_k,t_{k+1})$, we can define the error vector $e^k(t) := \col_{i \in \cN^k}(e_i(t))$ (where $e_i = \hat x_i - x$) that may have different size for each interval $[t_k,t_{k+1})$.
%Nevertheless, we want to express the dynamics of $x(t)$ and $e_i(t) = \hat x_i(t) - x(t)$, $i \in \cN(t)$, for the $k$-th inverval $[t_k,t_{k+1})$.
%This is not difficult because, during the inverval, the size of the vector $e$ is constant and we can simply let $\cN(t) = \cN_k$.
For the $k$-th interval, let us define $X_*^k$ and $Y_*^k$ as the solutions of Bass' equations \eqref{eq:Bass} and \eqref{eq:Bass_dual} with $\row_{i \in \cN^k}(B_i)$ and $\col_{i \in \cN^k}(C_i)$ instead of $B$ and $C$, respectively.
Then, letting  
%For convenience, let 
$F_i^{k,\infty} := -B_i^T(X_*^k)^{-1}$, $L_i^{k,\infty} := -(Y_*^k)^{-1}C_i^T$, $\nu_i^{k,\infty} := |\cN^k|$, and $\gamma_*^{k,\infty}$ as the evaluation of $\gamma_*$ in \eqref{eq:gamma_infty} with the parameters $N$, $X_*$, and $Y_*$ replaced by $|\cN^k|$, $X_*^k$, $Y_*^k$ (where the superscript $\infty$ means its limit if we suppose the $k$-th time interval is $[t_k,\infty)$ as done in the proof of Theorem \ref{thm:main}), the dynamics of $x$ and $e^k$ for $[t_k,t_{k+1})$ can be obtained 
% the quantities computable as if 
% and $\gamma^\infty = \gamma_i(\infty)$ that are the same for all $i$, which are well-defined assuming that the $k$-th inverval (see the proof of Theorem \ref{thm:main}).
from \eqref{eq:plant} and \eqref{eq:prop1} as
%Nevertheless, we can obtain the following equation from \eqref{eq:plant} and \eqref{eq:prop1}:
 %\eqref{eq:Z_i_X_i} with \eqref{eq:F_i(t)} and \eqref{eq:Phi} for $X_i$, 
 %\eqref{eq:W_i_Y_i} with \eqref{eq:L_i(t)} and \eqref{eq:Phi} for $Y_i$,
 %\eqref{eq:d_zeta_i} with \eqref{eq:d_zeta_0}
\begin{align}
\begin{bmatrix}
 \dot x \\ \dot e^k
\end{bmatrix} &= \bigg( \underbrace{\begin{bmatrix}
			A+\sum_{i \in \cN^k} B_iF_i^{k,\infty} & \row_{i \in \cN^k}(B_i F_i^{k,\infty}) \\
			H_k & G_k -\gamma_*^{k,\infty}(\cL_k \otimes I_n)
 \end{bmatrix}}_{=: \; \Omega_k} \notag \\
&\qquad\qquad\qquad + \underbrace{\begin{bmatrix}
		\Delta_k^{xx}(t) & \Delta_k^{xe}(t) \\
		\Delta_k^{ex}(t) & \Delta_k^{ee}(t)
\end{bmatrix}}_{=: \; \Delta_k(t)} \bigg) \begin{bmatrix}
 x \\ e^k
 \end{bmatrix} \label{eq:dot_xe}
\end{align}
%which is valid for $t \in [t_k,t_{k+1})$ 
where 
\begin{align}\label{eq:GH}
\begin{split}
G_k &= \diag_{i \in \cN^k} (A + |\cN^k| L_i^{k,\infty} C_i + |\cN^k| B_i F_i^{k,\infty}) \\
 &\quad -1_{|\cN^k|} \otimes \row_{i \in \cN^k}(B_i F_i^{k,\infty}) \\
H_k &= \col_{i \in \cN^k}(|\cN^k|B_i F_i^{k,\infty} - \sum\nolimits_{i \in \cN^k} B_i F_i^{k,\infty}) \\
\Delta_k^{xx} &= \sum\nolimits_{i \in \cN^k} B_i (\underline F_i(t)-F_i^{k,\infty}) \\
\Delta_k^{xe} &= \row_{i \in \cN^k}(B_i(\underline F_i(t) - F_i^{k,\infty})) \\
\Delta_k^{ex} &= \col_{i \in \cN^k}(B_i(\underline \nu_i(t) \underline F_i(t)-\nu_i^{k,\infty} F_i^{k,\infty})) \\
&\quad -1_{|\cN^k|} \otimes \sum\nolimits_{i \in \cN^k} B_i (\underline F_i(t)-F_i^{k,\infty}) \\
\Delta_k^{ee} &= \diag_{i \in \cN^k}(B_i(\underline \nu_i(t) \underline F_i(t)-\nu_i^{k,\infty} F_i^{k,\infty})) \\
&\quad +\diag_{i \in \cN^k}((\underline \nu_i(t) \underline L_i(t)-\nu_i^{k,\infty} L_i^{k,\infty})C_i) \\
&\quad -1_{|\cN^k|} \otimes \row_{i \in \cN^k}(B_i(\underline F_i(t)-F_i^{k,\infty})) \\
&\quad -(\diag_{i \in \cN^k}(\gamma_i(t)-\gamma_*^{k,\infty}) \cdot \cL_k) \otimes I_n .
\end{split}
\end{align}
%We use $e(t)$ by abusing the notation even if its size varies depending on the corresponding interval; i.e., $e(t) = \col_{i \in \cN_k, t \in [t_k,t_{k+1})}(e_i(t))$.

This $(x,e^k)$-dynamics is the key equation, and the role of other dynamics \eqref{eq:Z_i_X_i}, \eqref{eq:Phi}, \eqref{eq:W_i_Y_i}, \eqref{eq:PhiY}, \eqref{eq:d_zeta_i}, and \eqref{eq:d_zeta_0} for $Z_i$, $X_i$, $\Phi_i^X$, $W_i$, $Y_i$, $\Phi_i^Y$, $\psi_i$, and $\nu_i$ are to make $\underline F_i(t)$, $\underline L_i(t)$, $\underline \nu_i(t)$, and $\gamma_i(t)$ converge to $F_i^{k,\infty}$, $L_i^{k,\infty}$, $\nu_i^{k,\infty}$, and $\gamma_*^{k,\infty}$ respectively, so that 
\begin{equation}\label{eq:delta0}
\Delta_k(t) \to 0.
\end{equation}
Then, based on the fact that $\Omega_k$ is Hurwitz by Theorem \ref{thm:separation}, our goal is to have a dwell-time $\tau_d>0$ such that, if $t_{k+1}-t_k \ge \tau_d$, $\forall k$, then, with 
\begin{equation}\label{eq:V}
V(t) := |x(t)|^2 + \sum_{i \in \cN(t)} |\hat x_i(t)-x(t)|^2 = |x(t)|^2 + |e^k(t)|^2
\end{equation}
where $k$ is such that $t \in [t_k,t_{k+1})$, it holds that
\begin{align}\label{eq:exp_stab_V}
V(t) \le m e^{-\lambda t} V(0), \quad \forall t \ge 0
\end{align} 
for some $m>0$ and $\lambda>0$.
%, so that the plant state $x(t)$ converges the origin.
The function $V(t)$ is not continuous due to the switchings at $t_k$, but can still be used for proving the convergence of $x$ to the origin, as in Fig.~\ref{fig:StabilityConcept}.
%Even if the function $V(t)$ is not continuous

There are three major difficulties to obtain the uniform convergence of \eqref{eq:exp_stab_V} for the system \eqref{eq:dot_xe}.
The first one is that $\Omega_k$ is just a Hurwitz matrix without any restriction, so that its convergence rate may become arbitrarily slow and it is hopeless to expect \eqref{eq:exp_stab_V}.
To avoid this difficulty, we impose the following assumption, which makes $\Omega_k$ an element of a finite set.
%, so that $\Omega_k$ is uniformly bounded over $k$.
%As briefly discussed at the end of Section \ref{sec:2}, 
%we will impose a limit for the maximum number of active agents and assume that $B_i$ and $C_i$ belong to a set of finite collection.
%In particular,

\begin{assmpt}\label{ass:barN}
Suppose that the system \eqref{eq:plant} has a finite number of input-output channels; that is, there exists a constant $\bar N\in\mathbb N$ such that $\cN(t)\subset \{1,\cdots, \bar N\}$ for all $t$. Let $\cB := \{B_1,\dots,B_{\bar N}\}$ and $\cC := \{C_1,\dots,C_{\bar N}\}$. Let $\Sigma$ be the collection of $\sigma \subset \{1,\dots,\bar N\}$ such that $(A,\row_{i \in \sigma}(B_i), \col_{i \in \sigma}(C_i))$ is controllable and observable, and let $\bG_\sigma$ be the collection of all undirected, connected, and unweighted graphs of the nodes $\sigma$.
Define $\bF := \{ (A,\row_{i \in \sigma}(B_i), \col_{i \in \sigma}(C_i)) \times \bG_\sigma : \sigma \in \Sigma \}$.
Assume $\bF$ is non-empty.
\end{assmpt}

Then, it is clear that $\bF$ is a finite set, and for each element of $\bF$, there are\footnote{We suppose $\beta$ of \eqref{eq:Bass} is predetermined such that $-(A+\beta I_n)$ be Hurwitz. An example is \eqref{eq:preset}.} the corresponding Bass' solutions $X_*$ and $Y_*$ (see \eqref{eq:Bass} and \eqref{eq:Bass_dual}), and let $\bX$ and $\bY$ be the collection of all such solutions $X_*$ and $Y_*$, respectively.
For convenience, let us denote $\|\bX^{-1}\| := \max\{\|X_*^{-1}\| : X_*\in\bX\}$ and $\|\bX\| := \max\{\|X_*\| : X_*\in\bX\}$. 
Define $\|\bY^{-1}\|$ and $\|\bY\|$ similarly.

The second difficulty is that, even though $\Delta_k(t)$ tends to zero as \eqref{eq:delta0} so that \eqref{eq:dot_xe} asymptotically becomes a stable LTI system, it is hopeless to expect \eqref{eq:exp_stab_V} if $\Delta_k(t)$ becomes arbitrarily large during the transient before it gets small.
In fact, this phenomenon does happen if $\det(X_i(t))$ becomes arbitrarily close to zero (see \eqref{eq:Phi}) during the transient of $X_i(t)$.
To prevent this phenomenon, we now modify the update rule of \eqref{eq:Phi} as
\begin{align}\label{eq:Phi2}
\begin{split}
\Phi_i^X &\leftarrow \begin{cases} \Phi_i^X, & \text{if } \; \lambda_{\min}(X_i(t)) < \frac{1}{2\bar N\|\bX^{-1}\|}, \\
X_i(t), & \text{if } \; \lambda_{\min}(X_i(t)) \ge \frac{1}{2\bar N\|\bX^{-1}\|}, \end{cases}
\end{split}
\end{align}
whenever $\!\!\mod(t,T) = 0$, with any initial $\Phi_i^X(0)$ such that $\lambda_{\min}(\Phi_i^X(0)) \ge 1/2\bar N\|\bX^{-1}\|$.
The same modification is needed for $\Phi_i^Y$ in \eqref{eq:PhiY}, but no modification is necessary for the filter \eqref{eq:nfilter} because its lower bound is already uniform over $\bF$.
As a result, we obtain the uniform bound for $\underline F_i$ as
\begin{align}\label{eq:Fibar}
\|\underline F_i(t)\| \le \frac{1}{|\underline\nu_i(t)|} \|B_i^T\| \|\underline X_i(t)^{-1}\| \le 4 \bar N \|\bX^{-1}\|, \quad \forall t,
\end{align}
which follows from \eqref{eq:F_i(t)}, \eqref{eq:nfilter}, \eqref{eq:B_i_C_i}, and \eqref{eq:Phi2}.
Similarly, we have $\|\underline L_i(t)\| \le 4 \bar N \|\bY^{-1}\|$, $\forall t$.
%{\color{red}And, $|\underline \nu_i(t)| \le ....$??}
To implement \eqref{eq:Phi2}, however, both $\|\bX^{-1}\|$ and $\|\bY^{-1}\|$, and $\bar N$ need to be precomputed.

The third difficulty for uniform convergence \eqref{eq:exp_stab_V} of the system \eqref{eq:dot_xe} is that, when a new agent join at time $t_k$, its states may be set arbitrarily large (as briefly discussed at the end of Section \ref{sec:2}).
For example, if $\hat x_i(t_k)$ is set arbitrarily large, then
%(so that $e_i(t_k) = \hat x_i(t_k) - x(t_k)$ becomes arbitrarily large because $x(t)$ is continuous), then 
%the time for $e_i(t_k) = \hat x_i(t_k) - x(t_k)$ to get reasonably small takes arbitrarily long so that 
it is hopeless to expect the uniform value $m$ in \eqref{eq:exp_stab_V}.
Similarly, if $X_i(t_k)$ can be set arbitrarily large, then it may take arbitrarily long until $X_i(t)$ gets reasonably close to $X_*/|\cN^k|$ so that the control agent starts stabilizing action, which may prevent the existence of a dwell-time.
A solution to this issue is to install a policy for setting the initial condition of newly joining agents as follows.
For this, we recall the fact that, when a new agent joins the network at time $t_k$, it can communicate with at least one neighbor that belongs to either $\cN(t_k) \cap \cN(t_k^-)$\footnote{For a time-varying quantity $f(t$), we denote $f(t_k^-)$ as the left-hand limit of $f$ at $t_k$, i.e., $f(t_k^-):=\lim_{\ep\uparrow0}f(t_k+\ep)$.} (implying the neighbor has been active before the switching) or $\cN(t_k) \setminus \cN(t_k^-)$ (implying the neighbor is another newly joining agent).
The following Algorithm is appended to Algorithm \ref{algo}, in which $a_i$ is a flag to indicate its state is not arbitrary.

\begin{figure}
	\centering
	\includegraphics[width=.7\columnwidth]{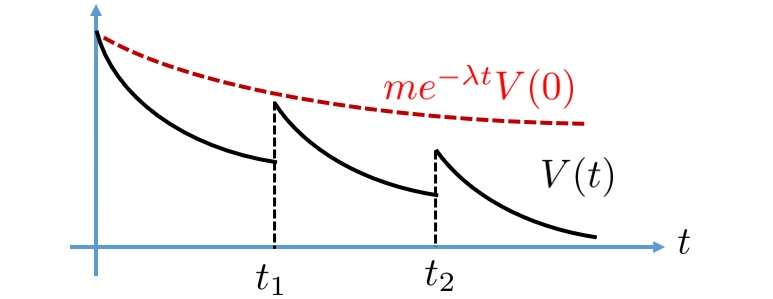}
	\caption{A typical behavior of $V(t)$ when control agents are allowed to join/leave the network.}
	\label{fig:StabilityConcept}
\end{figure}

\begin{algorithm}
\caption{When an agent joins for channel $i$ at $t_k$:}\label{algo2}
\begin{algorithmic}[1]
\renewcommand{\algorithmicrequire}{Let $\pi_i := (\hat x_i, Z_i, X_i, W_i, Y_i, \psi_i, \nu_i)$ and  $a_i=$ {\footnotesize$\begin{cases}T,~\text{if}~t_k=0\\F,~\text{otherwise.}\end{cases}$}}
\REQUIRE 
\renewcommand{\algorithmicensure}
\ENSURE
\WHILE{$a_i=F$}
\FOR{$j\in\mathcal N_i(t_k)$}
\IF{$a_j=T$}
    \STATE copy $\pi_i \leftarrow \pi_j$
    \STATE $a_i=T$
    \STATE {\bf break}
\ENDIF
\ENDFOR
\ENDWHILE
\STATE set $\Phi_i^X \leftarrow (1/2\bar N\|\bX^{-1}\|) I_n$ and $\Phi_i^Y \leftarrow (1/2\bar N\|\bY^{-1}\|) I_n$
\end{algorithmic} 
\end{algorithm}

Note that this algorithm does not violate the plug-and-play feature (F3) because it does not require resetting the initial condition for all other agents in the network.
Moreover, compared to the plug-and-play policy studied in \cite{Riverso2013,Bendtsen2013,Zeilinger2013}, the above policy is mild in the sense that it does not involve manipulation on the neighboring agents (and no manipulation is required when they leave the network).
Lastly, we note that, without such a copy process in Algorithm \ref{algo2}, we cannot expect the asymptotic convergence of $x(t)$ to zero because of the persistent perturbation caused by new initial conditions (even if they are uniformly bounded) of joining agents.
This is in sharp contrast to the stability analysis of switched systems in which the state is continuous in time under switchings.

Now, we are ready to state the main result of this section.

\begin{thm}\label{thm:dwell-time}
Under Assumptions \ref{ass:PnP_1} and \ref{ass:barN}, for any initial configuration of the plant and active control agents, there is a dwell-time $\tau_d > 0$ such that the closed-loop consisting of the plant \eqref{eq:plant} and the networked active control agents running Algorithms \ref{algo} and \ref{algo2} with the update rules of \eqref{eq:Phi} and \eqref{eq:PhiY} replaced like \eqref{eq:Phi2}, satisfy \eqref{eq:exp_stab_V} with positive numbers $m$ and $\lambda$, as long as $t_{k+1}-t_k \ge t_d$ for all $k$.
\end{thm}

The next subsection proves Theorem \ref{thm:dwell-time}, which also shows how to obtain $\tau_d$ (in \eqref{eq:tau_d}).

\subsection{Proof of Theorem \ref{thm:dwell-time}}

%Our proof is based on the following two lemmas.
%The first one relates the internal states of the active control agents before and after the switching.

We first look at the propagation of state values across the switching.
For this, let $\pi_i$ in Algorithm \ref{algo2} be a column vector (by vectorizing $X_i$ for example), i.e., $\pi_i \in \bR^q$ where $q = n + 4n^2 + 2$, and let $\pi^k(t) := \col_{i \in \cN^k}(\pi_i(t))$.
Then, there exists a sequence of matrix $\{S_k\}$ obtained by Algorithm \ref{algo2} such that
\begin{align}\label{eq:S_k}
\pi^k(t_k) = (S_k \otimes I_q) \pi^{k-1}(t_k^-).
\end{align}
Note that $S_k \in \bR^{|\cN^k| \times |\cN^{k-1}|}$ whose rows have only one element of 1 and all others are 0.
Therefore, it holds that
\begin{align}\label{eq:property_Sk}
S_k 1_{|\cN^{k-1}|} = 1_{|\cN^k|}, \quad \|S_k\| \le \sqrt{|\cN^k|}.
\end{align}
Moreover, let us define the average $\bar\pi^k \in \bR^q$ and the rest $\tilde\pi^k \in \bR^{(|\cN^k|-1)q}$ as
\begin{align}\label{eq:def_bar_Tilde_eta}
\bar\pi^k := (\frac{1}{|\cN^k|} 1_{|\cN^k|}^T \otimes I_q) \pi^k, \quad
\tilde\pi^k := (R_k^T \otimes I_q) \pi^k
\end{align}
where $R_k$ is the matrix satisfying \eqref{eq:R} with respect to the graph Laplacian $\cL_k$, so that we have the inverse of \eqref{eq:def_bar_Tilde_eta} as $\pi^k = 1_{|\cN^k|} \bar\pi^k + R_k \tilde\pi^k$.

\smallskip\noindent{\bf Claim 1:} It holds that
\begin{align}
\bar\pi^k(t_k) &= \bar\pi^{k-1}(t_k^-) + (\frac{1}{|\cN^k|} 1_{|\cN^k|}^T S_k R_{k-1} \otimes I_q) \tilde\pi^{k-1}(t_k^-) \notag \\
\tilde\pi^k(t_k) &= (R_k^T S_k R_{k-1} \otimes I_q) \tilde\pi^{k-1}(t_k^-). \label{eq:bar_tilde_eta}
\end{align}

\begin{proof} The above follows from
\begin{align*}
\bar\pi^k(t_k)
		&= (\frac{1}{|\cN^k|} 1_{|\cN^k|}^T \otimes I_q)\pi^k(t_k) \\
		&= (\frac{1}{|\cN^k|} 1_{|\cN^k|}^T \otimes I_q)(S_k \otimes I_q)\pi^{k-1}(t_k^-) \\
		&= (\frac{1}{|\cN^k|} 1_{|\cN^k|}^T S_k 1_{|\cN^{k-1}|} \otimes I_q) \bar\pi^{k-1}(t_k^-) \\
		&\quad + (\frac{1}{|\cN^k|} 1_{|\cN^k|}^T S_k R_{k-1} \otimes I_q) \tilde\pi^{k-1}(t_k^-) \\
\tilde\pi^k(t_k) &= (R_k^T \otimes I_q)\pi^k(t_k) \\
&= (R_k^T S_k 1_{|\cN^{k-1}|} \otimes I_q) \bar\pi^{k-1}(t_k^-) \\
&\quad + (R_k^T S_k R_{k-1} \otimes I_q) \tilde\pi^{k-1}(t_k^-)
\end{align*}
in which $R_k^T S_k 1_{|\cN^{k-1}|} = R_k^T 1_{|\cN^k|}=0$ by \eqref{eq:R}.
\end{proof}
%\hfill $\square$

Now, we look at how $X_i(t)$ in Theorem \ref{thm:pi} is uniformly bounded with respect to $X_*^k/|\cN^k|$. %where $X_*^k$ is the solution of Bass' equation \eqref{eq:Bass} for the $k$-th interval.
For this, with $X^k := \col_{i \in \cN^k}(X_i)$, $\tilde Z^k := (R_k^T \otimes I_n) \col_{i \in \cN^k}(Z_i)$, and $\tilde Z_*^k := (k_c/\gamma_c)((\Lambda_k^+)^{-1} R_k^T \otimes I_{n}) \col_{i \in \cN^k} (2B_iB_i^T)$ (where $\Lambda_k^+$ corresponds to $\cL_k$ according to \eqref{eq:R}), define 
\begin{equation}\label{eq:etac}
\eta_c^k := \begin{bmatrix} X^k - (1_{|\cN^k|} \otimes I_n) \frac{X_*^k}{|\cN^k|} \\ \tilde Z^k - \tilde Z_*^k \end{bmatrix}.
\end{equation}
%where $X_*^k$ is the solution of Bass' equation \eqref{eq:Bass} for the $k$-th interval, and
Note that the above definition is in fact the same as \eqref{eq:eta_c}.
Let $\bar m_c := \max_{\bF}\{m_c\}$ and $\underline \lambda_c := \min_{\bF}\{\lambda_c\}$, in which the values of $m_c$ and $\lambda_c$ are obtained from Theorem \ref{thm:pi} for each element of $\bF$.

\smallskip\noindent{\bf Claim 2:} If $t_{k+1}-t_k \ge \tau_c$ for all $k \ge 0$ where
\begin{align}\label{eq:tau_c}
\tau_c := \frac{1}{\underline\lambda_c}\ln(2 \bar m_c \bar N \|\bX^{-1}\| \max\{|\eta_c^0(0)|, r_c\})
\end{align}
then 
%Even if the size of $\eta_c^k$ varies over time, we claim that
\begin{align}
\|\eta_c^0(t)\| &\le \bar m_c e^{-\underline\lambda_c t} \|\eta_c^0(0)\|, & & t \in [0,t_1), \label{eq:exp_bd_eta_c1} \\
\|\eta_c^k(t)\| &\le \bar m_c e^{-\underline\lambda_c(t-t_k)} r_c, & & t \in [t_k,t_{k+1}), \; k \ge 1 \label{eq:exp_bd_eta_c2}
\end{align}
where
\begin{align}\label{eq:r_c}
r_c := \frac{1}{2\sqrt{\bar N} \|\bX^{-1}\|} + \frac{k_c \bar N^3}{\gamma_c} + 2\sqrt{\bar N}\|\bX\|.
\end{align}

\begin{proof}
Since $\bar m_c$ is the maximum of all possible $m_c$ and $\underline \lambda_c$ is the minimum of all possible $\lambda_c$, \eqref{eq:exp_bd_eta_c1} holds itself by Theorem \ref{thm:pi}.
In order to prove \eqref{eq:exp_bd_eta_c2}, we first claim that
\begin{align}\label{eq:eta_c_t_k_t_k-}
\|\eta_c^k(t_k)\| \le \sqrt{\bar N}\|\eta_c^{k-1}(t_k^-)\| + \frac{k_c \bar N^3}{\gamma_c} + 2\sqrt{\bar N}\|\bX\|.
\end{align}
Indeed, it follows from \eqref{eq:S_k} and \eqref{eq:property_Sk} that
\begin{align*}
&X^k(t_k) - (1_{|\cN^k|} \otimes I_n) \frac{X_*^k}{|\cN^k|} \\
&= X^k(t_k) - (1_{|\cN^k|} \otimes I_n) \Big( \frac{X_*^{k-1}}{|\cN^{k-1}|} + \frac{X_*^k}{|\cN^k|} - \frac{X_*^{k-1}}{|\cN^{k-1}|} \Big) \\
&= (S_k \otimes I_n) \left( X^{k-1}(t_k^-) - (1_{|\cN^{k-1}|} \otimes I_n) \frac{X_*^{k-1}}{|\cN^{k-1}|} \right) \\
&\qquad - (1_{|\cN^k|} \otimes I_n) \left( \frac{X_*^k}{|\cN^k|} - \frac{X_*^{k-1}}{|\cN^{k-1}|} \right)
\end{align*}
and by \eqref{eq:bar_tilde_eta} it is obtained that
\begin{align*}
\tilde Z^k(t_k) - \tilde Z_*^k &= (R_k^T S_k R_{k-1} \otimes I_n)[\tilde Z^{k-1}(t_k^-)-\tilde Z_*^{k-1}] \notag\\
&+ [(R_k^T S_k R_{k-1} \otimes I_n) \tilde Z_*^{k-1} - \tilde Z_*^k].
\end{align*}
By combining above two equalities, it is obtained that
\begin{align*}
\eta_c^k(t_k)
&= \begin{bmatrix}
	S_k\otimes I_n & 0 \\ 0 & R_k^T S_k R_{k-1} \otimes I_n
	\end{bmatrix} \eta_c^{k-1}(t_k^-) \notag \\
&+ \begin{bmatrix}
	-(1_{|\cN^k|} \otimes I_n)(\frac{X_*^k}{|\cN^k|}-\frac{X_*^{k-1}}{|\cN^{k-1}|}) \\
	(R_k^T S_k R_{k-1} \otimes I_n) \tilde Z_*^{k-1} - \tilde Z_*^k
\end{bmatrix}
\end{align*}
which yields, by \eqref{eq:property_Sk} and \eqref{eq:R}, that
\begin{align}\label{eq:bd_eta_tk}
\|\eta_c^k(t_k)\| &\le \sqrt{|\cN^k|} \big[ \|\eta_c^{k-1}(t_k^-)\| +\|X_*^{k-1}\|+\|X_*^k\| \notag\\
&\quad + \|\tilde Z_*^{k-1}\| + \|\tilde Z_*^k\| \big].
\end{align}
Here, it follows from the definition of $\tilde Z_*^k$, the fact that $\|(\Lambda^+_k)^{-1}\| \le 1/\lambda_2(\cL_k) \le |\cN^k|^2/4$, and \eqref{eq:B_i_C_i}, that
\begin{align}\label{eq:bd_tilde_Z*}
\|\tilde Z_*^k(t)\| \le \frac{\bar N^2 k_c}{4\gamma_c}\|\col_{i\in\cN_k}(2B_iB_i^T)\| \le \frac{k_c \bar N^{\frac{5}{2}}}{2\gamma_c}.
\end{align}
By applying \eqref{eq:bd_tilde_Z*} and $\|X_*\|\le \|\bX\|$ to \eqref{eq:bd_eta_tk}, we obtain \eqref{eq:eta_c_t_k_t_k-}.
Then, \eqref{eq:exp_bd_eta_c2} can be shown based on \eqref{eq:tau_c}, \eqref{eq:exp_bd_eta_c1}, \eqref{eq:r_c}, and \eqref{eq:eta_c_t_k_t_k-} because, at every $t = t_k^-$ for $k \ge 1$, $\|\eta_c^{k-1}(t_k^-)\| \le 1/2\bar N \|\bX^{-1}\|$.
Indeed, since $t_1-t_0 \ge \tau_c$, it holds that $|\eta_c^0(t_1^-)| \le 1/2\bar N\|\bX^{-1}\|$, which, combined with \eqref{eq:eta_c_t_k_t_k-}, implies that $|\eta_c(t_1)|\le r_c$. 
By repeating this process for $k=2,\cdots$, we obtain \eqref{eq:exp_bd_eta_c2}.
\end{proof}
%\hfill $\square$

Noting that $\|X_i(t)-X_*^k/|\cN^k|\| \le \|\eta_c^k(t)\|$, $i \in \cN^k$, if $\tau_{k+1}-\tau_k \ge \tau_c$, $\forall k \ge 0$, then, a uniform bound is obtained as
\begin{align}
\left\|X_i(t)-\frac{X_*^k}{|\cN^k|}\right\| &\le \bar m_c e^{-\underline\lambda_c(t-t_k)} \max\{|\eta_c^0(0)|,r_c\} \label{eq:exp_bd_err_Xi} \\
&\le \bar m_c \max\{|\eta_c^0(0)|,r_c\}, \label{eq:unif_bd_eta_c}
\end{align}
for $i \in \cN_k$ and $t \in [t_k, t_{k+1})$, $\forall k \ge 0$.
%Moreover, for $t \in[t_k+\tau_c,\,t_{k+1})$, we have that $\|X_i(t)-X_*^k/|\cN_k|\| \le 1/2\bar N\|\bX^{-1}\| = 1/2\|\bX^{-1}/\bar N\|$ by \eqref{eq:tau_c} and \eqref{eq:exp_bd_err_Xi}.
%Therefore, by Lemma \ref{lem:PQ} in Appendix \ref{app:PQ}, $X_i(t)$ is guaranteed to be invertible and satisfies
%\begin{align}\label{eq:threhold_Xi}
%\left\|X_i(t)^{-1}-\left(\frac{X_*^k}{|\cN_k|}\right)^{-1}\right\| \le 2\bar N^2 \|\bX^{-1}\|^2 \left\|X_i(t)-\frac{X_*^k}{|\cN_k|}\right\|
%\end{align}
%for $t \in[t_k+\tau_c,\,t_{k+1})$.

Similar inequalities as \eqref{eq:exp_bd_err_Xi} and \eqref{eq:unif_bd_eta_c} %, and \eqref{eq:threhold_Xi} 
can be obtained for $Y_i$, with similarly defined $\tau_o$.

In addition, similar analysis can be performed for $\nu_i$.
Define, from \eqref{eq:eta_s} like done as in \eqref{eq:etac}, 
$$\eta_s^k := \begin{bmatrix} \col_{i \in \cN^k}(\nu_i - |\cN^k|) \\ \bar R_k^T \col_{i \in \cN^k}(\psi_i) - \frac{k_s}{\gamma_s}(\bar\Lambda_k^+)^{-1} \bar R_k^T \begin{bmatrix}-|\cN^k| \\ 1_{|\cN^k|}\end{bmatrix} \end{bmatrix}.$$
And, let $\bar m_s:=\max_{\bF}\{m_s\}$ and $\underline\lambda_s:=\min_{\bF}\{\lambda_s\}$, obtained from Theorem \ref{thm:DLee2018}, and define
\begin{align}
	r_s &:= \frac{\sqrt {\bar N}}{2} +\bar N\sqrt{\bar N+1} + \frac{k_s}{2\gamma_s}(\bar N+1)^{\frac{7}{2}} \\
	\tau_s &:= \frac{1}{\underline\lambda_s}\ln\Big(2\bar m_s\max\{|\eta_s^0(0)|, r_s\}\Big). \label{eq:r_s}
\end{align}
Then, it is seen that, if $\tau_{k+1}-\tau_k \ge \tau_s$, $\forall k \ge 0$, a uniform bound is obtained as
\begin{align}
|\nu_i(t)-|\cN^k|| &\le \bar m_s e^{-\underline\lambda_s(t-t_k)} \max\{|\eta_s^0(0)|,r_s\} \label{eq:exp_bd_err_zetai} \\
	&\le \bar m_s\max\{|\eta_s^0(0)|,r_s\}, \label{eq:unif_bd_eta_s}
\end{align}
for $i \in \cN^k$ and $t \in [t_k, t_{k+1})$, $\forall k \ge 0$.
%Moreover, for $t \in[t_k+\tau_s,\,t_{k+1})$, we have that $|\nu_i(t)-|\cN_k|| \le 1/2$ by \eqref{eq:r_s} and \eqref{eq:exp_bd_err_zetai}.
%Therefore, $\nu_i(t)$ is guaranteed to be invertible because $|\cN_k| \ge 1$, $\forall k \ge 0$.

\medskip

Now look at \eqref{eq:dot_xe} again and recall that the Hurwitz matrix $\Omega_k$ corresponds to an element of the finite collection $\bF$.
Hence, if we define $\underline \lambda := \min_{P \in \bF_P} \lambda_{\min}(P)$ and $\bar \lambda := \max_{P \in \bF_P} \lambda_{\max}(P)$ where $\bF_P := \{ P > 0 : P\Omega + \Omega^T P = -I_n, \Omega$ corresponds to an element in $\bF \}$.

\smallskip\noindent{\bf Claim 3:} 
If $t_{k+1}-t_k \ge \tau_1 := \max\{\tau_c,\tau_o,\tau_s\}$, then there exists a uniform bound $D_0$ such that
\begin{equation}\label{eq:D0}
	\|\Delta_k(t)\| \le D_0, \quad \forall t \in [t_k,t_{k+1}), \; \forall k \ge 0.
\end{equation}
Moreover, there exists $\tau_2 \ge 0$ such that, if $t_{k+1}-t_k \ge \tau_1 + \tau_2$,
\begin{align}\label{eq:Delta_bd_tau_2}
\|\Delta_k(t)\| \le \frac{\underline\lambda}{4\bar\lambda^2}, \quad \forall t \in [t_k+\tau_1+\tau_2, t_{k+1}), \; \forall k \ge 0.
\end{align}

\begin{proof}
Referring to the definitions of $\Delta_k^{xx}$, $\Delta_k^{xe}$, $\Delta_k^{ex}$, and $\Delta_k^{ee}$ in \eqref{eq:GH}, where $B_i$, $C_i$, $|\cN^k|$ and $\|\cL_k\|$ are uniformly bounded (see \eqref{eq:Anderson85}) under Assumption \ref{ass:barN}, it holds that
\begin{align}
	\|\Delta_k\| &\le 4\bar N^{\frac{3}{2}} \max_{i\in\cN^k} \|\underline{F}_i-F_i^{k,\infty}\|
	+\bar N\max_{i\in\cN^k}\left\|\underline{Y}_i^{-1}\!\!\!\!-(\underline Y_*^k)^{-1}\right\|\notag\\
	&\quad+2\bar N\max_{i\in\cN^k}\left\|\underline{X}_i^{-1}\!\!\!\!-(\underline X_*^k)^{-1}\right\| +\bar N \max_{i\in\cN^k}|\gamma_i-\gamma_*^{k,\infty}|.\label{eq:ubd_Dk}
\end{align}
where $\underline X_*^k:=X_*^k/|\cN^k|$ and $\underline Y_*^k:=Y_*^k/|\cN^k|$.
Note that $\Delta_k(t)$ is uniformly bounded, if $\gamma_i(t)$ is uniformly bounded, because $F_i^{k,\infty}$, $\underline X_*^k$, $\underline Y_*^k$, and $\gamma_*^{k,\infty}$ are uniformly bounded under Assumption \ref{ass:barN} and so are $\underline F_i(t)$, $\underline X_i(t)^{-1}$, and $\underline Y_i(t)^{-1}$ based on \eqref{eq:Fibar} and \eqref{eq:Phi2}. According to \eqref{eq:tk}, the uniform boundedness of $\gamma_i(t)$ requires additionally that $\underline X_i(t)$, $\underline Y_i(t)$, and $\underline \nu_i(t)$ are uniformly bounded, which is true by \eqref{eq:unif_bd_eta_c} and \eqref{eq:unif_bd_eta_s} and justifies the claim \eqref{eq:D0}.

To show \eqref{eq:Delta_bd_tau_2}, we note that both $\|\underline{F}_i-F_i^{k,\infty}\|$ and $|\gamma_i-\gamma_*^{k,\infty}|$ in \eqref{eq:ubd_Dk} are uniformly bounded by $D_1(\|\underline{X}_i-\underline X_*^k\|+\|\underline{Y}_i-\underline Y_*^k\|+\|\underline{X}_i^{-1}\!\!\!\!-(\underline X_*^k)^{-1}\|+\|\underline{Y}_i^{-1}\!\!\!\!-(\underline Y_*^k)^{-1}\|+|\underline \nu_i-\nu_i^{k,\infty}|)$ because those are Lipschitz continuous functions and the variables belong to the uniformly bounded set.
By applying these bounds to \eqref{eq:ubd_Dk}, it is seen that there is $\tau_2$ such that \eqref{eq:Delta_bd_tau_2} holds because $\underline{X}_i-\underline X_*^k$, $\underline{Y}_i-\underline Y_*^k$, $\underline{X}_i^{-1}\!\!\!\!-(\underline X_*^k)^{-1}$, $\underline{Y}_i^{-1}\!\!\!\!-(\underline Y_*^k)^{-1}$, and $\underline \nu_i-\nu_i^{k,\infty}$ converges to zero exponentially whose rate is uniform as shown in \eqref{eq:exp_bd_err_Xi} and \eqref{eq:exp_bd_err_zetai}
\footnote{Indeed, by design, for $t\in[t_k+\tau_c,t_{k+1})$, it can be shown that $X_i(t)^{-1}$ exists and that $\|\underline{X}_i^{-1}\!\!\!\!-(\underline X_*^k)^{-1}\|$ is uniformly bounded by $2\bar N^2\|\bX^{-1}\|^2\|\underline{X}_i-\underline X_*^k\|$. Similar bound holds for $\|\underline{Y}_i^{-1}\!\!\!\!-(\underline Y_*^k)^{-1}\|$ on $t\in[t_k+\tau_o,t_{k+1})$.}.
\end{proof}

Finally, let us prove \eqref{eq:exp_stab_V}.
First, with $W_k := \row(x^T,(e^k)^T) P_k \col(x,e^k)$ where $P_k$ is the positive definite solution to $P_k\Omega_k + \Omega_k^T P_k = -I_n$, the time derivative of $W_k$ along \eqref{eq:dot_xe} yields that
\begin{align}
\dot W_k %&\le -V_k + 2 \lambda_{\max}(P_k)\|\Delta_k(t)\|V_k \label{eq:dW_1} \\
&\le -\frac{W_k}{\bar\lambda}\left(1-\frac{2\bar\lambda^2\|\Delta_k(t)\|}{\underline\lambda}\right) . \label{eq:dW_2} 
\end{align}
%in which we used $V_k=|x|^2+|e^k|^2$ and $\underline\lambda V_k \le W_k \le \bar\lambda V_k$.
Therefore, it follows that 
\begin{align}\label{eq:Delta_dW}
\dot W_k \le -\frac{1}{2\bar\lambda} W_k \quad\text{ if }\|\Delta_k(t)\| \le \frac{\underline\lambda}{4\bar\lambda^2}.
\end{align} 
By applying \eqref{eq:Delta_bd_tau_2} to \eqref{eq:Delta_dW} and combining \eqref{eq:D0} and \eqref{eq:dW_2}, we obtain
\begin{align*}
\dot W_k \le \left\{\begin{array}{ll}
		\left(-\frac{1}{2\bar\lambda}+2\frac{\bar\lambda}{\underline\lambda}D_0\right) W_k, & t \in[t_k, t_k+\tau_1+\tau_2) \\
		-\frac{1}{2\bar\lambda} W_k, & t\in[t_k+\tau_1+\tau_2,t_{k+1})
	\end{array}\right. .
\end{align*}
Then, noting that $\underline\lambda V(t) \le W_k(t) \le \bar\lambda V(t)$ for all $t \ge 0$ from \eqref{eq:V}, we obtain that
\begin{align}
V(t) &\le \frac{\bar\lambda}{\underline\lambda}e^{2\frac{\bar\lambda}{\underline\lambda}D_0(\tau_1+\tau_2)-\frac{1}{2\bar\lambda}(t-t_k)}V(t_k) \notag \\
&= e^{a_p-\lambda_p(t-t_k)} V(t_k),\qquad t\in[t_k,t_{k+1}),	\label{eq:Vt_Vt_k}
\end{align}
where 
\begin{align}
a_p := \ln \frac{\bar\lambda}{\underline\lambda} + 2\frac{\bar\lambda}{\underline\lambda}D_0(\tau_1+\tau_2),\quad
	\lambda_p:=\frac{1}{2\bar\lambda}.
\end{align}
On the other hand, referring to \eqref{eq:S_k} we can derive that
\begin{align}\begin{split}\label{eq:ratio_V_k}
V(t_k) &= \left\|\begin{bmatrix} x(t_k) \\ e^k(t_k) \end{bmatrix}\right\|^2
= \left\|\begin{bmatrix} x(t_k^-) \\ (S_k \otimes I_n) e^{k-1}(t_k^-) \end{bmatrix}\right\|^2 \\
%&= \left\|\begin{bmatrix} I_n & 0 \\ 0 & S_k \otimes I_n \end{bmatrix} \begin{bmatrix} x(t_k^-)\\ e^{k-1}(t_k^-) \end{bmatrix}\right\|^2 \\
&\le \left\|\begin{bmatrix} I_n & 0 \\ 0 & S_k \otimes I_n \end{bmatrix}\right\|^2 V(t_k^-) \\
&= \max\{1,\|S_k\|^2\} V(t_k^-) \\
&\le |\cN^k| V(t_k^-) \le \bar N V(t_k^-) .
\end{split}\end{align}
Now, set the dwell time $\tau_d$ and the constant $\lambda>0$ such that
\begin{align}\label{eq:tau_d}
\tau_d > \frac{\ln\bar N+a_p}{\lambda_p},\quad \lambda:=\lambda_p-\frac{\ln\bar N+a_p}{\tau_d}
\end{align}
which results in that
\begin{align}\label{eq:prop_dwell}
0<\lambda\le\lambda_p, \quad e^{(\ln\bar N+a_p-\lambda_p\tau_d)}\le e^{-\lambda\tau_d}.
\end{align}
Then by using \eqref{eq:prop_dwell}, we have %, for $\delta\ge\tau_d$,
\begin{align}
&e^{(\ln\bar N+a_p-\lambda_p\delta)} = e^{(\ln\bar N+a_p-\lambda_p\tau_d)}e^{-\lambda_p(\delta-\tau_d)} \notag \\
&\qquad \le e^{-\lambda\tau_d}e^{-\lambda_p(\delta-\tau_d)} \notag \\
&\qquad \le e^{-\lambda\tau_d}e^{-\lambda(\delta-\tau_d)} \le e^{-\lambda \delta}, \quad \forall\delta\ge\tau_d. \label{eq:prop_lambda}
\end{align}
Finally, for any $t$ such that $t_{i+1} > t \ge t_i$, it holds that
\begin{align}
V(t) &\le e^{a_p-\lambda_p(t-t_i)}V(t_i) \notag\\
	&\le e^{a_p-\lambda_p(t-t_i)} \bar N V(t_i^-) \notag\\
	&\le e^{a_p-\lambda_p(t-t_i)} \left(\bar N e^{a_p-\lambda_p(t_i-t_{i-1})}\right) V(t_{i-1}) \notag\\	
	&\vdots \notag\\
	&\le e^{a_p-\lambda_p(t-t_i)}\prod_{q=1}^i \left(e^{\ln{\bar N}+a_p-\lambda_p(t_q-t_{q-1})}\right)V(t_0) \notag\\
	&\le e^{a_p-\lambda(t-t_i)}\left(e^{-\lambda(t_i-t_0)}\right) V(0) \notag\\
%	&\le e^{a_p-\lambda_p(t-t_i)}\left(e^{-\lambda(t_i-t_j)}\right)\sqrt{\bar N}e^{a_p-\lambda_p(t_j-\mu)}V(\mu) \notag\\
	&\le e^{a_p-\lambda t} V(0) 
%	&= e^{2a_p+\frac12\log\bar N-\lambda(t-\mu)}V(\mu)
\end{align}
where \eqref{eq:prop_lambda} is used in the fifth inequality because $t_q-t_{q-1} \ge \tau_d$, with the inequality $\lambda<\lambda_p$.
Finally, the condition \eqref{eq:exp_stab_V} follows if we set
\begin{align}
m := e^{a_p}.
\end{align}
This completes the proof.
The technique of choosing the dwell time $\tau_d$ in this proof is originated from \cite[Lemma 2]{Morse1996}.

%%%%%%%%%%%%%%%%%%%%
\section{Conclusion}
%%%%%%%%%%%%%%%%%%%%
In this paper, we propose a distributed output-feedback control for networked agents to stabilize a linear multi-channel plant by using local interaction with the plant and local inter-agent communication.
The key features of our distributed scheme are its decentralized design and plug-and-play capability, which enables each agent to self-organize its own controller using only locally available information and allows any agents to join or leave the network without requiring any redesign of remaining agents while maintaining the closed-loop stability as long as the basic condition is met.
To enable plug-and-play stabilization, we have developed a distributed Bass' algorithm based on the intuition from the blended dynamics theorem.
In the future, we will extend the scope of this approach to other areas of distributed control regarding output regulation or optimal control.

\appendix

\subsection{Continued Proof of Theorem \ref{thm:separation}}\label{app:1}

Now, consider a Lyapunov function candidate $V$ such that $V(x,\bar e,\tilde e):=(1/2)(x^T M_1x+\bar\phi\bar e^T M_2 \bar e + \tilde\phi\tilde e^T\tilde e)$ where $\bar\phi>0$ and $\tilde\phi>0$ will be chosen later.
The time derivative of $V$ along \eqref{eq:dot_e_dot_x} becomes
\begin{align}
\dot V 
&= - x^TQ_1x + x^TM_1BF\bar e + x^T M_1 \square_1\tilde e \notag\\
&\quad -\bar\phi\bar e^T Q_2\bar e + \bar\phi\bar e^T M_2 \square_2\tilde e +\tilde\phi\tilde e^T \square_3 x +\tilde\phi\tilde e^T \square_4 \bar e \notag\\
&\quad   -\tilde\phi\tilde e^T[\gamma (\Lambda^+\otimes I_n) -\square_5]\tilde e \notag\\
&\le -\frac{\ep}{4}|x|^2 - \left(\bar\phi\ep-\frac{\delta^2\|BF\|^2}{\ep} \right)|\bar e|^2 \notag\\
&\quad + \left(\bar\phi\delta\|\square_2\|+\tilde\phi\|\square_4\|\right)|\bar e||\tilde e| \notag\\
&\quad -\tilde\phi\left(\gamma\lambda_2(\cL)-\|\square_5\|-\frac{\delta^2\|\square_1\|^2}{\ep\tilde\phi} -\frac{\tilde\phi\|\square_3\|^2}{\ep}\right)|\tilde e|^2 \label{eq:sep_dV_1}
\end{align}
where $\delta:=\lambda_{\max}(\diag(M_1,M_2))$ and $\ep:=\lambda_{\min}(\diag(Q_1,Q_2))$.
Note that, based on \eqref{eq:B_i_C_i}, \eqref{eq:R}, and $\|G\| \le \theta$, the following inequalities hold:
\begin{align}\begin{split}\label{eq:bd_square15}
	\|\square_1\|^2 &\le N\max_{i\in\cN}\|F_i\|^2,\quad \|\square_3\|^2 \le 4N^3 \max_{i\in\cN} \|F_i\|^2 \\
	\|\square_2\| &\le \frac{\theta}{\sqrt N},\quad \|\square_4\| \le \sqrt{N}\theta,\quad \|\square_5\|\le \theta.
\end{split}\end{align}
By using $\|BF\|\le N\max_{i\in\cN}\|B_iF_i\|$ with \eqref{eq:B_i_C_i} and by substituting \eqref{eq:bd_square15} into \eqref{eq:sep_dV_1}, it is obtained that
\begin{align}\label{eq:dV_THM_1}
\dot V &\le \!-\frac{\ep}{4}|x|^2\!\! -\!\!\left[\bar\phi\ep-\frac{\delta^2N^2\max\limits_{i\in\cN}\|F_i\|^2}{\ep}\right]\!\!\!|\bar e|^2 
\!\!+\!\theta\frac{\bar\phi\delta+\tilde\phi N}{\sqrt{N}}|\bar e||\tilde e| \notag\\
&\quad-\tilde\phi\left[\gamma\lambda_2(\cL)-\theta-\frac{N\max\limits_{i\in\cN}\|F_i\|^2}{\ep}\left(\frac{\delta^2}{\tilde\phi}+4N^2\tilde\phi\right)\right]|\tilde e|^2 .
\end{align}
Choose $\bar\phi := 2\delta^2N^2\max_{i\in\cN}\|F_i\|^2/\ep^2 + \tilde\phi N/\delta$.
Then the right-hand side of \eqref{eq:dV_THM_1} becomes negative definite if
\begin{align*}
	\gamma\lambda_2(\cL) &> \theta+\frac{\theta^2\delta}{\ep}+\frac{N\max\limits_{i\in\cN}\|F_i\|^2}{\ep}\left(\frac{\delta^2}{\tilde\phi}+4N^2\tilde\phi+\frac{\theta^2\delta^4}{\ep^2\tilde\phi}\right).
\end{align*}
Now choose $\tilde\phi := (\delta/2N) \sqrt{1 + \theta^2\delta^2/\ep^2}$.
Then, \eqref{eq:gamma_general_1} follows from the above condition.
% equals to \eqref{eq:gamma_general}. 
%This completes the proof.

\subsection{Proof of Theorem \ref{thm:pi}}\label{app:2}

Using the fact that ${\rm vec}(ABC) = (C^T \otimes A) {\rm vec}(B)$ for any matrices $A$, $B$, and $C$, the vectorized version of \eqref{eq:Z_i_X_i} through $\zeta_i:={\rm vec}(Z_i)$ and $\chi_i:={\rm vec}(X_i)$ is given by
\begin{align}\begin{split}\label{eq:nu_i_chi_i}
\dot\zeta_i &= -\gamma_c\sum_{j\in\cN_i}(\chi_j-\chi_i) \\
\dot\chi_i &= k_c[\bar A\chi_i+{\rm vec}(2B_iB_i^T)]+\gamma_c\sum_{j\in\cN_i}(\chi_j-\chi_i) \\ &\qquad+\gamma_c\sum_{j\in\cN_i} (\zeta_j-\zeta_i),\qquad i=1,2,\cdots,N.	
\end{split}\end{align}
Here, since a Kronecker sum of Hurwitz matrices is Hurwitz \cite{Neudecker69}, the matrix $\bar A$ is Hurwitz and the solution $P$ of \eqref{eq:Lyap_barA} is positive definite.
The whole system with states $\zeta:=\col_{i=1}^N(\zeta_i)$ and $\chi:=\col_{i=1}^N(\chi_i)\in\bR^{Nn^2}$ now reads as
\begin{align}
	\dot\zeta &= \gamma_c(\cL\otimes I_{n^2})\chi \label{eq:nu_chi} \\
	\dot\chi &= k_c[(I_N \otimes\bar A)\chi + W]-\gamma_c(\cL\otimes I_{n^2})\chi-\gamma_c(\cL\otimes
	I_{n^2})\zeta \notag
\end{align}
where $W:=\col_{i=1}^N({\rm vec}(2B_iB_i^T))$.
Let us consider the following coordinate change:
\begin{align}\begin{split}\label{eq:bar_tilde_nu_chi}
\begin{bmatrix}
\bar\chi \\ \tilde\chi
\end{bmatrix}
:=
\begin{bmatrix}
\frac{1}{N}1_N^T \otimes I_{n^2} \\ R^T \otimes I_{n^2}
\end{bmatrix}
\chi,\quad
\begin{bmatrix}
\bar\zeta \\ \tilde\zeta
\end{bmatrix}
:=		
\begin{bmatrix}
\frac{1}{N}1_N^T \otimes I_{n^2} \\ R^T \otimes I_{n^2}
\end{bmatrix}\zeta
\end{split}\end{align}
where $R\in\bR^{N \times (N-1)}$ is given in \eqref{eq:R}.
Here, the inverses are given by $\chi=(1_N\otimes I_{n^2})\bar\chi+(R\otimes I_{n^2})\tilde\chi$ and $\zeta=(1_N\otimes I_{n^2})\bar\zeta+(R\otimes I_{n^2})\tilde\zeta$.
Then \eqref{eq:nu_chi} through \eqref{eq:bar_tilde_nu_chi} becomes
\begin{align}\begin{split}\label{eq:nu_chi_2}
\dot{\bar \zeta} &= 0,\qquad \dot{\tilde \zeta} =\gamma_c (\Lambda^+\otimes I_{n^2})\tilde\chi \\
\dot{\bar\chi} &= k_c[\bar A\bar\chi +\frac{1}{N}\sum_{i=1}^N {\rm vec}(2B_iB_i^T)]\\
\dot{\tilde\chi} &= -\big[\gamma_c(\Lambda^+\otimes I_{n^2})-k_c(I_{(N-1)}\otimes\bar A)\big]\tilde\chi \\
&\quad -\gamma_c(\Lambda^+\otimes I_{n^2})\tilde\zeta+ k_c(R^T\otimes I_{n^2}) W
\end{split}\end{align}
where $\Lambda^+$ is defined in \eqref{eq:R}.
From \eqref{eq:bar_tilde_nu_chi} and \eqref{eq:nu_chi_2}, it is clear that $\bar\zeta(t) = (1/N) \sum_{i=1}^N \zeta_i(0)$ for all $t \ge 0$, and that $\bar \zeta$ does not affect the behavior of $\tilde \zeta$, $\bar \chi$, and $\tilde \chi$.

We claim that $\tilde \zeta(t) \to \tilde \zeta_*$, $\bar \chi(t) \to \bar \chi_*$, and $\tilde \chi \to 0$ as time tends to infinity with $k_c>0$ and $\gamma_c>0$, where
\begin{align}\label{eq:bar_chi*} 
\begin{split}
\tilde\zeta_* &= \frac{k_c}{\gamma_c}[(\Lambda^+)^{-1}\otimes I_{n^2}](R^T\otimes I_{n^2})W \\
\bar\chi_* &= -\bar A^{-1}\left(\frac{1}{N}\sum_{i=1}^N {\rm vec}(2B_iB_i^T)\right) .
\end{split}
\end{align}
This claim proves \eqref{eq:X_i_X^*/N} because the above equation implies $\bar A (N\bar \chi_*) + {\rm vec}(2BB^T) = 0$ 
while the vectorized form of \eqref{eq:Bass} is $\bar A {\rm vec}(X_*) + {\rm vec}(2BB^T) = 0$.

To show the claim, let $\eta := \bar\chi-\bar\chi_*$ and $\xi := \tilde\zeta-\tilde\zeta_*$.
Then, we have 
\begin{align}
\dot{\eta} &= k_c\bar A\eta \notag \\
\dot{\xi} &=\gamma_c (\Lambda^+\otimes I_{n^2})\tilde\chi \label{eq:eta_xi_tildechi} \\
\dot{\tilde\chi} &= -\big[\gamma_c(\Lambda^+\otimes I_{n^2})-k_c(I_{(N-1)}\otimes\bar A)\big]\tilde\chi -\gamma_c(\Lambda^+\otimes I_{n^2})\xi. \notag
\end{align}
Consider the Lyapunov function candidate $V=(1/2)\eta^TP\eta+\tilde V$ where
\begin{align*}
\tilde V:=\frac{1}{2}\begin{bmatrix}
\xi \\ \tilde\chi
\end{bmatrix}^T
\!\!\!\left(
\begin{bmatrix}
(\phi+2) &1\\ 1 &\phi+1
\end{bmatrix}
\otimes I_{(N-1)}
\otimes P
\right)
\begin{bmatrix}
\xi \\ \tilde\chi
\end{bmatrix}
\end{align*}
with $\phi>0$ being determined soon.
The time derivative of $V$ is given by
\begin{align}
&\dot V =-k_c|\eta|^2-\gamma_c\xi^T(\Lambda^+\otimes P)\xi +k_c\xi^T(I_{(N-1)}\otimes P\bar A)\tilde\chi \notag\\
&\qquad -\gamma_c\phi\tilde\chi^T(\Lambda^+\otimes P)\tilde\chi -k_c(\phi+1)|\tilde\chi|^2 \notag\\
%&\quad \le -k_c|\eta|^2 -k_c\begin{bmatrix}|\xi|\\ |\tilde \chi|\end{bmatrix}^T\omega(\phi)\begin{bmatrix}|\xi|\\ |\tilde \chi|\end{bmatrix}\label{eq:dV_eta_chi_xi}
&\le  -k_c|\eta|^2 -\gamma_c\lambda_2\lambda_{\min}(P)|\xi|^2+k_c\|P\|\|\bar A\||\xi||\tilde\chi|\notag\\
&\quad-(\gamma_c\lambda_2\lambda_{\min}(P)\phi+k_c\phi+k_c)|\tilde\chi|^2\label{eq:dV_eta_chi_xi}.
\end{align}
%where 
%\begin{align*}
%	\omega(\phi):=\begin{bmatrix} \frac{\gamma_c}{k_c}\lambda_2\lambda_{\min}(P) &-\frac{\|P\| \|\bar A\|}{2}\\
%		-\frac{\|P\|\|\bar A\|}{2} &\frac{\gamma_c}{k_c}\lambda_2\lambda_{\min}(P)\phi + \phi + 1\end{bmatrix}.
%\end{align*}
Thus, for given $\gamma_c>0$ and $k_c>0$, one can find $\phi>0$ satisfying
\begin{align}\label{eq:phiubound}
	\phi&>\frac{1}{(1+\frac{\gamma_c}{k_c}\lambda_2\lambda_{\min}(P))}\left( \frac{\|P\|^2\|\bar A\|^2}{4\frac{\gamma_c}{k_c}\lambda_2\lambda_{\min}(P)}-1 \right)
\end{align}
which renders $\dot V$ in \eqref{eq:dV_eta_chi_xi} to be negative definite.
By using $\|\eta_c\|_F^2=|\chi-(1_N\otimes I_{n^2})\bar\chi_*|^2+|\tilde \zeta-\tilde \zeta_*|^2=N|\eta|^2+|\xi|^2+|\tilde \chi|^2$ and $\|\eta_c\|\le \|\eta_c\|_F \le \sqrt{n}\|\eta_c\|$, we obtain \eqref{eq:m_lambda_eta_c} from the stability of \eqref{eq:eta_xi_tildechi}.

On the other hand, to guarantee the convergence rate $\lambda_c$, let us now consider $V$ with $\phi=1$.
Then, it can be seen that 
\begin{align}\label{eq:bd_tilde_V}
\tilde V \le \frac{5+\sqrt{5}}{4}\lambda_{\max}(P)(|\xi|^2+|\tilde\chi|^2).
\end{align}
Now, with $\phi = 1$, \eqref{eq:dV_eta_chi_xi} becomes
\begin{align*}
&\dot V \le -k_c |\eta|^2 \\
&-\frac{k_c}{2}\left(2\frac{\gamma_c}{k_c}\lambda_2\lambda_{\min}(P)+2-\sqrt{4+\|P\|^2\|\bar A\|^2}\right)(|\xi|^2+|\tilde\chi|^2) \\
&\le -\frac{2k_c}{\lambda_{\max}(P)}\frac{\eta^TP\eta}{2} \\ &\quad-\frac{2k_c\left(2\frac{\gamma_c}{k_c}\lambda_2\lambda_{\min}(P)+2-\sqrt{4+\|P\|^2\|\bar A\|^2}\right)}{\lambda_{\max}(P)(5+\sqrt{5})}\tilde V.
\end{align*}
Finally, the condition \eqref{eq:DistBass_kgamma} yields $\dot V \le -2\lambda_c V$.

\subsection{Proof of Theorem \ref{thm:DLee2018}}\label{app:3}

For clarity, let us denote $\bar R\in\bR^{N\times(N+1)}$ and $\bar \Lambda^+\in\bR^{N \times N}$ matrices satisfying \eqref{eq:R} with $\bar\cL$.
Let us define $\nu:=\text{col}_{i=0}^N(\nu_i)$ and $\psi:=\text{col}_{i=0}^N(\psi_i)$.
Then the dynamics \eqref{eq:d_zeta_i} and \eqref{eq:d_zeta_0} through the coordinate change $\bar \nu := 1_{N+1}^T\nu/(N+1)$, $\tilde \nu:=\bar R^T\nu$, $\bar \psi := 1_{N+1}^T\psi/(N+1)$, and $\tilde \psi:=\bar R^T\psi$, read as
\begin{align*}
	\dot{\bar\psi} &=0,\qquad\qquad \dot{\tilde\psi} =\gamma_s\bar \Lambda^+ \tilde\nu \\
	\dot{\bar\nu} &= -\frac{1_{N+1}^T}{N+1}(k_s J+\gamma_s\bar\cL)(1_{N+1}\bar\nu+\bar R\tilde\nu)+\frac{k_s N}{N+1} \\
	\dot{\tilde\nu} &= -\bar R^T(k_s J+\gamma_s\bar\cL)(1_{N+1}\bar\nu+\bar R\tilde\nu)  -\gamma_s\bar \Lambda^+\tilde\psi +k_s\bar R^Tw
\end{align*}
where $J=\diag(1,0,\cdots,0)\in\bR^{(N+1)\times(N+1)}$ and $w:=[0~1_N^T]^T$.
Since $\bar \psi(t) = \sum_{i=0}^N \psi_i(0)/(N+1)$ for all $t \ge 0$ and $\bar \psi$ does not affect other dynamics, we just analyze the dynamics of $\tilde \psi$, $\bar \nu$, and $\tilde \nu$, which has the equilibrium point $(\bar\nu_*,\tilde\nu_*,\tilde\psi_*)$ given by $\bar\nu_*=N$, $\tilde\nu_*=0$, and $\tilde \psi_*=(k_s/\gamma_s)(\bar \Lambda^+)^{-1}\bar R^T(w-NJ1_{N+1})$.
Define $\eta:=\sqrt{N+1}(\bar \nu-\bar \nu_*)$ and $\xi:=\tilde \psi-\tilde \psi_*$.
Then we have 
\begin{align}
	\begin{split}\label{eq:xi_eta_zeta}
		\dot \xi &= \gamma_s\bar \Lambda^+\tilde \nu,\\
		\dot \eta &= -\frac{1_{N+1}^T}{\sqrt {N+1}}(k_sJ+\gamma_s\bar {\mathcal L})\bigg(\frac{1_{N+1}}{\sqrt{N+1}}\eta+\bar R\tilde \nu\bigg)\\
		\dot {\tilde \nu} &= -\bar R^T(k_sJ+\gamma_s\bar {\mathcal L})\bigg(\frac{1_{N+1}}{\sqrt{N+1}}\eta+\bar R\tilde \nu\bigg)-\gamma_s\bar \Lambda^+\xi.
	\end{split}
\end{align}
Consider the Lyapunov function candidate $V=\phi|\eta|^2/2 + (\phi+1)|\xi|^2/2 + \phi|\tilde \nu|^2/2 + |\xi+\tilde \nu|^2/2$ 
where $\phi>0$ will be chosen later. The time derivative of $V$ along \eqref{eq:xi_eta_zeta} is given by
\begin{align}
	\dot V& = -k_s\phi[\eta^T~\,\tilde \nu^T]\begin{bmatrix}\frac{1_{N+1}^T}{\sqrt {N+1}}\\\bar R^T\end{bmatrix}\Big(J+\frac{\gamma_s}{k_s}\bar {\mathcal L}\Big)\begin{bmatrix}\frac{1_{N+1}}{\sqrt {N+1}}~\bar R\end{bmatrix}\begin{bmatrix}\eta\\\tilde \nu\end{bmatrix}\notag\\
	&\quad -\gamma_s\xi^T\bar \Lambda^+\xi-k_s\xi^T\bar R^TJ\frac{1_{N+1}}{\sqrt{N+1}}\eta-k_s\xi^T\bar R^TJ\bar R\tilde \nu \notag\\
	&\quad -k_s\tilde \nu^T\bar R^TJ\frac{1_{N+1}}{\sqrt{N+1}}\eta-k_s\tilde \nu^T\bar R^TJ\bar R\tilde \nu\notag\\
	&\leq -k_s\phi \lambda_{\min}\Big(J+\frac{\gamma_s}{k_s}\bar {\mathcal L}\Big)\Big(|\eta|^2+|\tilde \nu|^2\Big)-\gamma_s \lambda_2(\bar\cL)|\xi|^2\notag\\
	&\quad +k_s|\xi|\,|\eta|+k_s|\xi|\,|\tilde \nu|+k_s|\tilde \nu|\,|\eta|.\label{eq:dV_eta_xi_zeta}
\end{align}
where the last inequality holds since \cite[Lemma 1]{DLee2018} guarantees that the matrix $(J+(\gamma_s/k_s)\bar {\mathcal L})$ is positive definite with $\gamma_s>0$ and $k_s>0$. 
Thus, one can find $\phi>0$ satisfying
\begin{align*}
	\phi > \frac{1}{\lambda_{\min}(J+\frac{\gamma_s}{k_s}\bar {\mathcal L})}\Big(\frac{k_s}{2\gamma_s\lambda_2(\bar\cL)}+\frac{1}{2}\Big)
\end{align*}
which renders $\dot V$ of \eqref{eq:dV_eta_xi_zeta} to be negative definite. By using $|\eta_s|^2=|\nu-1_{N+1}\bar \nu_*|^2+|\tilde \psi-\tilde \psi_*|^2=|\eta|^2+|\xi|^2+|\tilde \nu|^2$, we obtain \eqref{eq:m_lambda_eta_s} from the stability of \eqref{eq:xi_eta_zeta}.

On the other hands, to guarantee the convergence rate $\lambda_s$, we note that 
\begin{align}\label{eq:ubd_V}
	V \leq \frac{2\phi+3+\sqrt 5}{2}(|\eta|^2+|\xi|^2+|\tilde \nu|^2).
\end{align}
In addition, under the second condition of \eqref{eq:DistN_kgamma}, it follows from \cite[Lemma 1]{DLee2018} that $\lambda_{\min}(J + (\gamma_s/k_s)\bar {\mathcal L}) \ge 1/(4(N+1))$.
Now, let $\phi=6(N+1)$.
Then, we have $\phi\lambda_{\min}(J+ (\gamma_s/k_s) \bar {\mathcal L}) \ge 3/2$.
By applying this lower bound and the second condition of \eqref{eq:DistN_kgamma} to \eqref{eq:dV_eta_xi_zeta} and by using \eqref{eq:ubd_V}, it is obtained that
\begin{align*}
	\dot V&\le -k_s\frac{2-\sqrt 2}{2}(|\eta|^2+|\xi|^2+|\tilde \nu|^2)
	\le -\frac{k_s(2-\sqrt{2})}{12N+15+\sqrt 5}V.
\end{align*}
Finally, the condition \eqref{eq:DistN_kgamma} yields $\dot V\leq -2\lambda_s V$.

\section*{acknowledgement}

Need for Section \ref{sec:dwell} is motivated by the associate editor, for which the authors are grateful. The authors also appreciate the anonymous reviewer for the improved Algorithm \ref{algo2}.

%%%%%%%%%%%%%%%%%%%%%%%%%%%%%%%%%%%%%%%%%%%%%%%%%%%%%%%%%%%%%%%%%%%%%%%%%%%%%%%% BIBLIOGRAPHY

\begin{IEEEbiographynophoto}{Taekyoo Kim}
	%[{\includegraphics[width=1in,height=1.25in,clip,keepaspectratio]{Taekyoo_Kim.pdf}}]{Taekyoo Kim}
	received his B.S. degree in mechanical engineering and M.S., and Ph.D. degrees in electrical engineering from Seoul National University, Korea, in 2007, 2009, and 2019, respectively.
	From 2010 to 2014, he was a researcher at Agency for Defense Development, Korea.
	He is currently a postdoctoral researcher at Education and Research program for Future ICT Pioneers, Seoul National University, Korea.
	His research interests include distributed estimation/control, output regulation, and geometric control technique.
\end{IEEEbiographynophoto}
\begin{IEEEbiographynophoto}{Donggil Lee}
	%[{\includegraphics[width=1in,height=1.25in,clip,keepaspectratio]{Donggil_Lee.png}}]{Donggil Lee}
	received the B.S. degree in electrical engineering from Seoul National University in 2015.	
	He is currently pursuing a Ph.D. degree in electrical engineering from Seoul National University.	
	His current research interests are in the area of distributed control, encrypted control, and model predictive control.	
\end{IEEEbiographynophoto}
\begin{IEEEbiographynophoto}{Hyungbo Shim}
	received the B.S., M.S., and Ph.D. degrees from Seoul National University, Korea, and held the post-doc position at University of California, Santa Barbara till 2001. 
	He joined Hanyang University, Seoul, Korea, in 2002. 
	Since 2003, he has been with Seoul National University, Korea. 
	He served as associate editor for Automatica, IEEE Trans. on Automatic Control, Int. Journal of Robust and Nonlinear Control, and European Journal of Control, and as editor for Int. Journal of Control, Automation, and Systems. 
	He was the Program Chair of ICCAS 2014 and Vice-program Chair of IFAC World Congress 2008. 
	His research interest includes stability analysis of nonlinear systems, observer design, disturbance observer technique, secure control systems, and synchronization.
\end{IEEEbiographynophoto}

\end{document}